\newtheoremstyle{break}
  {\topsep}{\topsep}%
  {\itshape}{}%
  {\bfseries}{}%
  {\newline}{}%
\theoremstyle{break}
\newtheorem{lemma}{Lemma}
\newtheorem{theorem}{Theorem}
\newtheorem{definition}{Definition}
\newtheorem{result}{Result}
\DeclareMathOperator*{\Vphi}{\boldsymbol{\varphi}}
\DeclareMathOperator*{\btheta}{\boldsymbol{\theta}}
\begin{document}

\begin{frontmatter}

\title{Efficient methods for the estimation of the multinomial parameter for the two-trait group testing model}
\runtitle{Multinomial group testing estimation}


\author{\fnms{Gregory} \snm{Haber}\ead[label=e1]{habergw@nih.gov}}
\address{Biostatistics Branch, Division of Cancer Epidemiology and Genetics, National Cancer Institute, NIH,
  Bethesda, MD 20892, USA \printead{e1}}
\and
\author{\fnms{Yaakov} \snm{Malinovsky}\ead[label=e2]{yaakovm@umbc.edu}}
\address{Department of Mathematics and Statistics, University of Maryland, Baltimore County, Baltimore, MD 21250,
  USA \printead{e2}}

\runauthor{G. Haber and Y. Malinovsky}

\begin{abstract}
  Estimation of a single Bernoulli parameter using pooled sampling is among the oldest problems in the group testing literature. To carry out such estimation, an array of efficient estimators have been introduced covering a wide range of situations routinely encountered in applications. More recently, there has been growing interest in using group testing to simultaneously estimate the joint probabilities of two correlated traits using a multinomial model. Unfortunately, basic estimation results, such as the maximum likelihood estimator (MLE), have not been adequately addressed in the literature for such cases. In this paper, we show that finding the MLE for this problem is equivalent to maximizing a multinomial likelihood with a restricted parameter space. A solution using the EM algorithm is presented which is guaranteed to converge to the global maximizer, even on the boundary of the parameter space. Two additional closed form estimators are presented with the goal of minimizing the bias and/or mean square error. The methods are illustrated by considering an application to the joint estimation of transmission prevalence for two strains of the Potato virus Y by the aphid {\em Myzus persicae}.
\end{abstract}


\begin{keyword}
\kwd{EM algorithm}
\kwd{group testing}
\kwd{multinomial sampling}
\kwd{restricted parameter space}
\end{keyword}



\end{frontmatter}

\section{Introduction}
Estimation of some trait in a population when the (unknown) prevalence is rare
and/or only a limited number of tests can be performed is a difficult statistical problem. One approach in such cases is group testing, in which
individuals are screened in pools as opposed to individually. Depending on the
underlying prevalence and group size, such methods have been shown to yield large gains in efficiency (as measured by mean square error (MSE)) and,
often, a reduction in the total number of tests required \citep[see, for example,][]{thompson1962,swallow1985}. Applications can be found in a wide range of areas, although uses in plant and animal sciences are especially common.

The standard group testing problem, in which the goal is estimation of a single trait, has been well studied, yielding an array of efficient estimators \citep[see, as a few examples,][]{burrows1987, tebbs2003, hepworth2009, santos2016, haber2018}. More recently, however, there has been interest in the use of pooled sampling for the simultaneous estimation of two or more traits \citep[see, for example,][]{ho2000, pfeiffer2002, ding2015, tebbs2013, warasi2016, li2017, hyun2018}. This interest has been spurred by the growing availability of multiplex assays in many areas of science designed for screening multiple diseases simultaneously. The benefits of such assays are clear, allowing for reductions in the number of tests needed for gathering data on two or more traits, as well as providing information on the joint distribution of the characteristics under study. This has led to the need to develop new statistical methods to handle data generated from such tools and to efficiently extract information on the underlying multivariate distribution. Furthermore, new methods for designing pooling studies utilizing multiplex assays which take the correlations among diseases into account have been needed as well. While research in these areas has been ongoing, unfortunately, even in the case of two- diseases, results for small sample estimation do not exist, and basic tools such as maximum likelihood estimation have not been adequately addressed.

Conceptually, finding the maximum likelihood estimator (MLE) for the two-trait group testing problem can be expressed as a special case of maximizing a multinomial likelihood with a constrained parameter space (details are provided in the following section). Due to the restrictions on the parameter space, closed form techniques, such as those based on the invariance property of the MLE, do not work in this case \citep[a recent work,][did report a closed form MLE based on this principle, but it can easily be checked that it yields estimates outside the parameter space]{li2017}. While numerical methods are possible, the restriction means that many estimates will fall on the boundary of the parameter space, and it is difficult to ensure convergence to a global maximum in such cases.

Maximization under constrained parameter spaces is a well studied problem for a variety of statistical models \citep[see, for example,][]{net1999, jam2004}. Numerical methods specifically for the multinomial model under a range of convex constraints have been developed as well \citep[see][and the references therein]{grend2017}.

While these previous methods can be adapted to the group testing problem, our goal here is to provide a much simpler solution for this special case which is guaranteed to yield a global maximizer. We show that, when optimizing over the boundary, the problem is equivalent to a convex optimization problem in one fewer dimension. The maximization can then be carried out using a variety of methods, and we develop here an EM algorithm-based approach. The resulting estimator is shown to converge to the unique global maximizer.

Two additional closed form estimators are presented as well. One, based on the method of moments, is shown to approximate the MLE very closely. The second, a shrinkage estimator based on the one-trait group testing estimator presented in \citet{burrows1987}, is developed with the intention of reducing the MSE and/or bias of the MLE.

Numerical comparisons in terms of relative bias and MSE are presented which cover a wide range of applicable situations. The methods are further illustrated by considering two experiments where the transmission rates for prevalences of different strains of Potato virus Y are to be estimated simultaneously.

\section{Statistical model}
Let $\varphi_{1}$ and $\varphi_{2}$ be marginally Bernoulli random variables with
parameters $0 < p_1 < 1$ and $0 < p_2 < 1$, respectively, each indicating the presence of a given trait.
 Then, $(\varphi_{1}, \varphi_{2})$ has a one-to-one correspondence to the vector $\Vphi =
(\varphi_{10}, \varphi_{01}, \varphi_{11})$ with joint multinomial
distribution $\Vphi \sim MN_3(1, \mathbf{p})$ and parameter space
$\boldsymbol{\Psi}_{\mathbf{p}} = \{\mathbf{p}: \mathbf{1}^\prime \mathbf{p} <
1, 0 \prec \mathbf{p} \prec 1\}$, where $\mathbf{1} = (1, 1, 1)^\prime$, $\mathbf{p} = (p_{10}, p_{01}, p_{11})$
and $p_{00} = 1 - \mathbf{1}^\prime \mathbf{p} = 1 - p_{10} - p_{01} - p_{11}$ and $\prec$ denotes element-wise
inequality. Note that the marginal parameters can be expressed as $p_1 = p_{10} + p_{11}$ and $p_2 = p_{01} + p_{11}$.
Throughout this work, our primary interest is estimation of the parameter $\mathbf{p}$.

The $i^{th}$ pooled sample comprised of $k$ individual units can then be represented by the random variable $(\vartheta^{i}_{1}, \vartheta^{i}_{2}) =(\max\{\varphi_{1_1}, \cdots, \varphi_{1_k}\}, \max\{\varphi_{2_1}, \cdots, \varphi_{2_k}\})$ which corresponds to 
\[\boldsymbol{\vartheta}^i = (\vartheta^i_{10}, \vartheta^i_{01}, \vartheta^i_{11}) \sim MN_3(1, \btheta),\]
where 
\begin{align}
\label{eq:theta14}
\btheta &= (\theta_{10}, \theta_{01}, \theta_{11}) \nonumber \\
&= ((p_{00} + p_{10})^k - p_{00}^k, (p_{00} + p_{01})^k - p_{00}^k,1 - (p_{00} + p_{10})^k - (p_{00} + p_{01})^k + p_{00}^k)
\end{align} and 
\begin{equation}
\label{eq:theta24}
\theta_{00} = 1 - \mathbf{1}^\prime \btheta = p_{00}^k. 
\end{equation}

If we sample $n$ such groups, we have the random variable $\mathbf{x} = (x_{10}, x_{01}, x_{11}) = \sum_{i=1}^n \boldsymbol{\vartheta}^i \sim
MN_3(n, \btheta)$ with parameter space $\boldsymbol{\Psi}_p =
\{\btheta(\mathbf{p}): \mathbf{1}^\prime \mathbf{p} < 1, 0 \prec \mathbf{p}
\prec 1\}$. For later use, we define $x_{00} = n - (x_{10} + x_{01} + x_{11})$.

It should be noted that $\boldsymbol{\Psi}_p$ is a proper subset of
the full parameter space $\boldsymbol{\Psi}_{\btheta} = \{\btheta:
\mathbf{1}^\prime \btheta < 1, 0 \prec \btheta \prec 1\}$. For example, with $k
= 2$, $\btheta = (0.45, 0.45, 0.05) \in \boldsymbol{\Psi}_{\btheta}$ is achieved if and only if
$\mathbf{p} = (0.484, 0.484, -0.192) \notin \boldsymbol{\Psi}_p$. As
such, maximizing the likelihood with respect to $\mathbf{p} \in
\boldsymbol{\Psi}_p$ is equivalent to the problem of maximizing a standard
multinomial likelihood with respect to $\btheta$ such that the estimate lies in
the restricted parameter space $\boldsymbol{\Psi}_p$.

For use in later results, we define the closure $\overline{\boldsymbol{\Psi}}_p
= \boldsymbol{\Psi}_p \cup \partial \boldsymbol{\Psi}_p$ where $\partial
\boldsymbol{\Psi}_p$ is the boundary of the parameter space. Likewise, let
$\mathcal{X} = \{\mathbf{x}: 0 \preceq \mathbf{x} \preceq n\}$, where $\preceq$ denotes element wise non-strict inequality, be the sample space of $\mathbf{x}$ with interior $\mathcal{X}_0 = \{\mathbf{x}: 0 \prec \mathbf{x} \prec n\}$.

\section{Maximum likelihood estimation}
We seek to maximize the log-likelihood
\begin{equation}
\label{eq:ll}
\ell(\btheta|\mathbf{x}) \propto x_{00} \log(\theta_{00}) + x_{10} \log(\theta_{10}) + x_{01}
\log(\theta_{01}) + x_{11} \log(\theta_{11}),
\end{equation} such that $\btheta \in \boldsymbol{\Psi}_p$.

The following lemma, the proof of which is given, together with all subsequent proofs, in Appendix \ref{sec:ap}, establishes the concavity of the log-likelihood function.
\begin{lemma}~\vspace{-15pt}
	\label{lm:llcon}
	\begin{itemize}
		\item[(a)] For $\mathbf{x} \in \mathcal{X}_0$, the log-likelihood given in (\ref{eq:ll}) is strictly
		concave for all $\mathbf{p} \in \boldsymbol{\Psi}_\theta$.
		\item[(b)] For all $\mathbf{x} \in \mathcal{X}$, the log-likelihood given in (\ref{eq:ll}) is
		concave (not necessarily strict) for all $\mathbf{p} \in
		\boldsymbol{\Psi}_\theta$.
	\end{itemize}
\end{lemma}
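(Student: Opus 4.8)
The plan is to establish both claims by a direct examination of the Hessian of $\ell(\btheta\mid\mathbf{x})$, treating (\ref{eq:ll}) as a function of the free coordinates $\btheta=(\theta_{10},\theta_{01},\theta_{11})$ over the open convex set $\boldsymbol{\Psi}_\theta$ and eliminating $\theta_{00}$ via $\theta_{00}=1-\mathbf{1}^\prime\btheta$. Because the data enter (\ref{eq:ll}) only as the nonnegative weights $x_{00},x_{10},x_{01},x_{11}$, the structure of the Hessian is completely transparent: part (a) reduces to a definiteness check, and part (b) follows from the semidefinite version of the same computation.

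For (b) I would first write $\ell$ as the sum of the four terms $x_{00}\log(1-\mathbf{1}^\prime\btheta)$, $x_{10}\log\theta_{10}$, $x_{01}\log\theta_{01}$, $x_{11}\log\theta_{11}$. On $\boldsymbol{\Psi}_\theta$ each argument is a strictly positive affine function of $\btheta$, so each summand is concave, since $t\mapsto x\log t$ is concave for $x\ge 0$ and concavity is preserved under composition with an affine map and under nonnegative linear combinations; hence $\ell$ is concave on $\boldsymbol{\Psi}_\theta$ for every $\mathbf{x}\in\mathcal{X}$. For (a), differentiating and using $\partial\theta_{00}/\partial\theta_{ij}=-1$ gives $\partial\ell/\partial\theta_{ij}=x_{ij}/\theta_{ij}-x_{00}/\theta_{00}$ for $(i,j)\in\{(1,0),(0,1),(1,1)\}$, and therefore the Hessian
\[
H(\btheta)=-\operatorname{diag}\!\big(\tfrac{x_{10}}{\theta_{10}^{2}},\ \tfrac{x_{01}}{\theta_{01}^{2}},\ \tfrac{x_{11}}{\theta_{11}^{2}}\big)-\tfrac{x_{00}}{\theta_{00}^{2}}\,\mathbf{1}\mathbf{1}^\prime .
\]
When $\mathbf{x}\in\mathcal{X}_0$ the counts $x_{10},x_{01},x_{11}$ are strictly positive, so the diagonal block is negative definite; the rank-one term $-(x_{00}/\theta_{00}^{2})\mathbf{1}\mathbf{1}^\prime$ is negative semidefinite for any $x_{00}\ge 0$, so $H(\btheta)\prec 0$ at every point of $\boldsymbol{\Psi}_\theta$, which gives strict concavity.

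I do not expect a genuine obstacle here; the only points that need care are bookkeeping. First, eliminating $\theta_{00}$ produces the coupling term $-(x_{00}/\theta_{00}^{2})\mathbf{1}\mathbf{1}^\prime$, which must be carried along (although it is not in fact required for the definiteness conclusion in (a) — the diagonal block alone already does the work). Second, in (b) one should resist claiming strictness: when some count equals $0$ or $n$ the Hessian is merely negative semidefinite — e.g.\ $x_{10}=n$ leaves $\ell\propto\log\theta_{10}$, which is flat in the $\theta_{01}$ and $\theta_{11}$ directions — so only non-strict concavity holds in general. Finally, it is worth noting that the concavity is asserted on all of $\boldsymbol{\Psi}_\theta$ rather than on the smaller $\boldsymbol{\Psi}_p$; this is exactly the property on which the constrained maximization developed in the following sections will rest.
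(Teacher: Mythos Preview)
Your Hessian computation in $\btheta$-coordinates is correct, but it establishes the wrong statement. The lemma asserts (strict) concavity of $\ell$ as a function of $\mathbf{p}$ --- note the phrasing ``for all $\mathbf{p}\in\boldsymbol{\Psi}_\theta$'' and, more decisively, the way the lemma is invoked in the proof of Theorem~\ref{md:mle11}, which relies on the behaviour of $\ell$ along line segments $t\tilde{\mathbf{p}}+(1-t)\mathbf{p}'$ taken in $\mathbf{p}$-space. Because the map $\mathbf{p}\mapsto\btheta(\mathbf{p})$ in (\ref{eq:theta14})--(\ref{eq:theta24}) is a polynomial of degree $k\ge 2$ and hence nonlinear, concavity in $\btheta$ does not automatically transfer to concavity in $\mathbf{p}$: a concave function composed with a non-affine inner map need not be concave. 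Your remark for part (b) that ``concavity is preserved under composition with an affine map'' is precisely the tool that is unavailable here.

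The paper's proof supplies exactly the step you omit. After showing that $-H(\tilde{\btheta})$ is positive (semi)definite --- essentially your calculation, in slightly different coordinates --- it passes to $\mathbf{p}$-coordinates via the Jacobian $J=\partial\btheta/\partial\mathbf{p}$, writes the Hessian in $\mathbf{p}$ as $J^\top H(\tilde{\btheta})\,J$, and uses the full column rank of $J$ to conclude that $-H(\tilde{\mathbf{p}})$ is positive (semi)definite. To complete your argument you would need this change-of-variables step (and, if you want to be fully rigorous, to account for the additional second-order term $\sum_s(\partial\ell/\partial\theta_s)\,\nabla^2_{\mathbf{p}}\theta_s$ that the chain rule for Hessians produces when the inner map is not affine).
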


Using standard multinomial theory, we know that, when maximizing over
$\boldsymbol{\Psi}_{\btheta}$, the unique MLE is given by $\hat{\btheta}^{MLE} =
\bar{\mathbf{x}} = \frac{\mathbf{x}}{n}$ for $\mathbf{x} \in \mathcal{X}_0$, and this point is the
unique maximizer of the likelihood for all $\mathbf{x} \in \mathcal{X}$ over the
closure of $\boldsymbol{\Psi}_{\btheta}$.
To find the MLE under the restricted parameter space, $\boldsymbol{\Psi}_p$, we
first note that there exists a one-to-one mapping $\btheta \mapsto \mathbf{p}$ as given in the following lemma. The proof of this lemma is found by inverting (\ref{eq:theta14}) and (\ref{eq:theta24}).

\begin{lemma}
	\label{lm:h24}
	The unique function $h:\btheta \mapsto \mathbf{p}$ is given by
	\begin{align*}
	p_{00} &= h_{00}(\btheta) = (1 - \theta_{10} - \theta_{01} -
	\theta_{11})^{1 / k},\\
	p_{10} &= h_{10}(\btheta) = (1 - \theta_{01} - \theta_{11})^{1 / k} -
	h_{00}(\btheta),\\
	p_{01} &= h_{01}(\btheta) = (1 - \theta_{10} - \theta_{11})^{1 / k} -
	h_{00}(\btheta),\\
	p_{11} &= h_{11}(\btheta) = 1 - p_{00} - p_{10} - p_{01}.	
	\end{align*}
	
\end{lemma}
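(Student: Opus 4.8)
The plan is simply to invert the system (\ref{eq:theta14})--(\ref{eq:theta24}), verifying at each stage that the root extraction involved is unambiguous. First I would combine (\ref{eq:theta24}) with the identity $\theta_{00} = 1 - \mathbf{1}^\prime\btheta$ to obtain $p_{00}^k = 1 - \theta_{10} - \theta_{01} - \theta_{11}$. Since any $\mathbf{p} \in \boldsymbol{\Psi}_{\mathbf{p}}$ has $p_{00} = 1 - \mathbf{1}^\prime\mathbf{p} \in (0,1)$, and $t \mapsto t^k$ restricted to $(0,1)$ is a strictly increasing bijection onto $(0,1)$, there is exactly one admissible solution, namely $p_{00} = h_{00}(\btheta)$.

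Next I would add $\theta_{00}$ to the first coordinate of (\ref{eq:theta14}), which telescopes to $\theta_{10} + \theta_{00} = (p_{00} + p_{10})^k$; rewriting the left-hand side via $\theta_{00} = 1 - \mathbf{1}^\prime\btheta$ gives $(p_{00} + p_{10})^k = 1 - \theta_{01} - \theta_{11}$. Because $p_{00} + p_{10} = 1 - p_2 \in (0,1)$, the same monotonicity argument produces the unique solution $p_{00} + p_{10} = (1 - \theta_{01} - \theta_{11})^{1/k}$, and subtracting the already determined $h_{00}(\btheta)$ yields $p_{10} = h_{10}(\btheta)$. The identical manipulation applied to the second coordinate of (\ref{eq:theta14}), using $p_{00} + p_{01} = 1 - p_1 \in (0,1)$, gives $p_{01} = h_{01}(\btheta)$. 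Finally $p_{11}$ is recovered from $p_{00} + p_{10} + p_{01} + p_{11} = 1$, which is $h_{11}(\btheta)$. Since each step admitted only one value consistent with $\btheta \in \boldsymbol{\Psi}_p$, the correspondence $\btheta \mapsto \mathbf{p}$ is single-valued and equals the stated $h$, establishing uniqueness.

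The only point that needs care — and it is a minor one — is justifying that each $k$-th root is the real root lying in $(0,1)$ rather than some other complex or negative real $k$-th root. This is handled by the observations just used: $p_{00}$, $p_{00} + p_{10} = 1 - p_2$, and $p_{00} + p_{01} = 1 - p_1$ are all strictly between $0$ and $1$, so their $k$-th powers — the arguments $1 - \theta_{10} - \theta_{01} - \theta_{11}$, $1 - \theta_{01} - \theta_{11}$, and $1 - \theta_{10} - \theta_{11}$ — lie in $(0,1)$, and on $(0,1)$ the map $t \mapsto t^k$ is invertible with the ordinary real $k$-th root as its inverse.
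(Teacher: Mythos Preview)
Your proposal is correct and follows essentially the same approach as the paper, which simply states that the lemma is proved by inverting (\ref{eq:theta14}) and (\ref{eq:theta24}). Your version is in fact more careful, explicitly verifying that each $k$-th root is unambiguous because the relevant quantities $p_{00}$, $p_{00}+p_{10}$, and $p_{00}+p_{01}$ all lie in $(0,1)$.
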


If we define the set \[R_n = \left\{\mathbf{x}: \left(\frac{x_{00} +
	x_{10}}{n}\right)^{1 /k} + \left(\frac{x_{00} + x_{01}}{n}\right)^{1 /k}
- \left(\frac{x_{00}}{n}\right)^{1 /k} < 1\right\}\] then, for any
$\mathbf{x} \in  \mathcal{X}_0 \cap R_n$, $h(\bar{\mathbf{x}}) \in
\boldsymbol{\Psi}_p$ where $h$ is as in Lemma \ref{lm:h24}, so that such values
provide the unique MLE for $\mathbf{p}$ by the invariance property of the MLE. Since the log-likelihood is a concave function on $\boldsymbol{\Psi}_{\btheta}$, it is clear that the maximizer for all $\mathbf{x} \notin R_n$ must lie in $\partial \boldsymbol{\Psi}_p$. This leads to the following result.

\begin{theorem}[Existence and uniqueness of MLE]~\vspace{-15pt}
	\label{md:mle1}
	\begin{itemize}
		\item[(a)] A necessary and sufficient condition for the maximum likelihood estimator of $\mathbf{p} \in \boldsymbol{\Psi}_p$ to exist and be unique is that $\mathbf{x} \in \mathcal{X}_0 \cap R_n$. In this case, the MLE is given by
		\begin{equation}
		\label{md:mle1eq}
		\hat{\mathbf{p}}^{MLE} = (\hat{p}_{10}^{MLE}, \hat{p}_{01}^{MLE}, \hat{p}_{11}^{MLE}), 
		\end{equation} where
		\begin{align*}
		\hat{p}_{00}^{MLE} &= \left(\frac{x_{00}}{n}\right)^{1 /k}, \\ 
		\hat{p}_{10}^{MLE} &= \left(\frac{x_{00} + x_{10}}{n}\right)^{1 /k} - \left(\frac{x_{00}}{n}\right)^{1 /k},\\
		\hat{p}_{01}^{MLE} &= \left(\frac{x_{00} + x_{01}}{n}\right)^{1 /k} - \left(\frac{x_{00}}{n}\right)^{1 /k},\\ \intertext{and}
		\hat{p}_{11}^{MLE} &= 1 - \hat{p}_{00}^{MLE} - \hat{p}_{10}^{MLE} - \hat{p}_{01}^{MLE}.
		\end{align*}
		\item[(b)] As $n\to \infty$, $P(\mathbf{x} \in \mathcal{X}_0 \cap R_n) = 1$ so that  the MLE as given in (\ref{md:mle1eq}) exists and is unique with probability one.
		
	\end{itemize}
\end{theorem}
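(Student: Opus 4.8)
The plan is to reduce part~(a) to the equivalence: the maximizer of \eqref{eq:ll} over $\boldsymbol\Psi_p$ exists and is unique if and only if $\bar{\mathbf x}:=\mathbf x/n$ lies in $\boldsymbol\Psi_p$ (in which case, by the invariance property of the MLE and Lemma~\ref{lm:h24}, the MLE of $\mathbf p$ equals $h(\bar{\mathbf x})$, which is \eqref{md:mle1eq}); and to reduce part~(b) to the strong law of large numbers. Recall the standard multinomial fact, quoted just above, that for \emph{every} $\mathbf x\in\mathcal X$ the log-likelihood \eqref{eq:ll} attains its maximum over the compact set $\overline{\boldsymbol\Psi}_{\btheta}$ at the single point $\bar{\mathbf x}$. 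The ``if'' direction is then immediate: since $\boldsymbol\Psi_p\subseteq\overline{\boldsymbol\Psi}_{\btheta}$, a point $\bar{\mathbf x}\in\boldsymbol\Psi_p$ is a fortiori the unique maximizer of \eqref{eq:ll} over the smaller set $\boldsymbol\Psi_p$, and invariance gives the stated closed form. For ``only if'', suppose the maximizer exists at some $\btheta^{\star}\in\boldsymbol\Psi_p$. Since $\boldsymbol\Psi_p$ is relatively open in $\overline{\boldsymbol\Psi}_{\btheta}$ (it is the $h$-preimage of the open set $\boldsymbol\Psi_{\mathbf p}$), $\btheta^{\star}$ is a \emph{local} maximizer of \eqref{eq:ll} on $\overline{\boldsymbol\Psi}_{\btheta}$, hence, by concavity of \eqref{eq:ll} on $\overline{\boldsymbol\Psi}_{\btheta}$, a global one; by the uniqueness just recalled, $\btheta^{\star}=\bar{\mathbf x}$, so $\bar{\mathbf x}\in\boldsymbol\Psi_p$.

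It remains to identify ``$\bar{\mathbf x}\in\boldsymbol\Psi_p$'' with ``$\mathbf x\in\mathcal X_0\cap R_n$''. Since $\boldsymbol\Psi_p\subset\boldsymbol\Psi_{\btheta}$, the relation $\bar{\mathbf x}\in\boldsymbol\Psi_p$ forces all coordinates of $\bar{\mathbf x}$ to be strictly positive, i.e.\ $\mathbf x\in\mathcal X_0$; conversely, when $\mathbf x\in\mathcal X_0$ the formulas of Lemma~\ref{lm:h24} show directly that $\hat p^{MLE}_{00},\hat p^{MLE}_{10},\hat p^{MLE}_{01}$ all lie in $(0,1)$, so the only further inequality required for $h(\bar{\mathbf x})\in\boldsymbol\Psi_{\mathbf p}$ is $\hat p^{MLE}_{11}>0$, which rearranges to exactly the defining inequality of $R_n$. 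Together with the first paragraph, this establishes the necessary-and-sufficient condition and, on $\mathbf x\in\mathcal X_0\cap R_n$, the closed form \eqref{md:mle1eq}.

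For part~(b), let $\btheta_0=\btheta(\mathbf p_0)\in\boldsymbol\Psi_p$ be the true parameter; by the strong law of large numbers $\bar{\mathbf x}\to\btheta_0$ almost surely. Since every coordinate of $\btheta_0$ is strictly positive, $\mathbf x\in\mathcal X_0$ for all sufficiently large $n$, a.s.\ (equivalently $P(\mathbf x\notin\mathcal X_0)\le\sum_j(1-\theta_{0j})^{\,n}\to 0$). Moreover the function $\mathbf y\mapsto(y_{00}+y_{10})^{1/k}+(y_{00}+y_{01})^{1/k}-y_{00}^{1/k}$ is continuous and, by \eqref{eq:theta14}--\eqref{eq:theta24}, equals $(p_{00}+p_{10})+(p_{00}+p_{01})-p_{00}=1-p_{11}$ at $\btheta_0$, which is $<1$ because $p_{11}>0$; hence $\mathbf x\in R_n$ for all sufficiently large $n$, a.s., as well. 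Therefore $P(\mathbf x\in\mathcal X_0\cap R_n)\to 1$, and part~(a) yields existence and uniqueness of the MLE with probability tending to one.

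The only step that is not bookkeeping is the ``only if'' direction in part~(a). The delicate point is that $\boldsymbol\Psi_p$ is \emph{open}, so the supremum of \eqref{eq:ll} over $\boldsymbol\Psi_p$ need not be attained; the argument shows, using concavity, that when $\bar{\mathbf x}\notin\boldsymbol\Psi_p$ no maximizer lies in $\boldsymbol\Psi_p$ at all --- one strictly inside would be a local, hence global, maximizer over $\overline{\boldsymbol\Psi}_{\btheta}$, forcing it to coincide with $\bar{\mathbf x}$ --- so the MLE simply fails to exist. This uniformly covers the case $\mathbf x\in\mathcal X_0\setminus R_n$ and the case of a vanishing cell count (where $\bar{\mathbf x}\in\partial\boldsymbol\Psi_{\btheta}$, automatically outside $\boldsymbol\Psi_p$); everything else follows from the concavity and continuity facts recorded above.
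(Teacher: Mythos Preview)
Your proposal is correct and follows essentially the same approach as the paper: invariance of the MLE plus Lemma~\ref{lm:h24} for sufficiency, concavity (Lemma~\ref{lm:llcon}) for necessity, and the strong law of large numbers with continuity for part~(b). The paper's own proof of (a) is a single sentence deferring to the preceding discussion, so your treatment is in fact more detailed; in particular, your ``only if'' argument (a maximizer in the open set $\boldsymbol\Psi_p$ would be a local, hence by concavity global, maximizer over $\overline{\boldsymbol\Psi}_{\btheta}$ and therefore must equal $\bar{\mathbf x}$) makes explicit what the paper leaves implicit when it says the maximizer for $\mathbf x\notin R_n$ ``must lie in $\partial\boldsymbol\Psi_p$.'' One small caution: your claim that $\mathbf x\in\mathcal X_0$ forces $\hat p^{MLE}_{00}\in(0,1)$ tacitly uses $x_{00}>0$, which is not literally guaranteed by the paper's stated definition $\mathcal X_0=\{\mathbf x:0\prec\mathbf x\prec n\}$ for $\mathbf x=(x_{10},x_{01},x_{11})$; the paper is equally loose on this point, and the intended reading of ``interior'' clearly includes $x_{00}>0$, but you may wish to note it.
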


\subsection{Maximization over the boundary}
While the above result is complete for large samples, in many cases $n$ will
necessarily be small and we will be interested in maximizing over the closure
$\overline{\boldsymbol{\Psi}}_p$. From here on,
$\hat{\mathbf{p}}^{MLE}$ will refer to the maximizer over this
extended space. Defining \[\overline{R}_n = \left\{\mathbf{x}: \left(\frac{x_{00} +
	x_{10}}{n}\right)^{1 /k} + \left(\frac{x_{00} + x_{01}}{n}\right)^{1 /k}
- \left(\frac{x_{00}}{n}\right)^{1 /k} \leq 1\right\},\] the invariance
property of the MLE and Lemma \ref{lm:h24} are sufficient to establish $\hat{\mathbf{p}}^{MLE} =
h(\bar{\mathbf{x}})$ for all $\mathbf{x} \in \overline{R}_n$.

To get an idea of how common boundary estimates can be, Table \ref{tab:prob1}
provides values of $P(\mathbf{x} \notin \overline{R}_n)$ for a wide range of $n$
with several realistic values of $\mathbf{p}$ and $k$. Note that, when $n = 1$, this probability is theoretically $0$ for all values of $k$ and $\mathbf{p}$.

\begin{table}[!htb]
	\caption{Values of $P(\mathbf{x} \notin \overline{R}_n)$ for varying $\mathbf{p}, n,$ and $k$.} 
	\label{tab:prob1}
	\centering
	\resizebox{\columnwidth}{!}{%
		\begin{tabular}{cccccc}
			\toprule
			& $(p_{10}, p_{01}, p_{11})=$ & $(0.045, 0.045, 0.005)$ & $(0.095, 0.045, 0.005)$ & $(0.1, 0.1, 0.1)$ & $(0.25, 0.05, 0.15)$ \\ \midrule
			& $n$ \\ \midrule
			$k = 2$& $5$ & 0.1029 & 0.1724 & 0.1376 & 0.0924 \\
			& $10$       & 0.2872 & 0.3935 & 0.0834 & 0.0380 \\
			& $15$       & 0.4299 & 0.5082 & 0.0349 & 0.0144 \\
			& $25$       & 0.5555 & 0.5465 & 0.0059 & 0.0015 \\
			& $50$       & 0.4819 & 0.4003 & 0.0002 & 0.0000 \\     
			& $100$      & 0.2475 & 0.2500 & 0.0000 & 0.0000 \\     
			& $500$      & 0.0194 & 0.0474 & 0.0000 & 0.0000 \\      
			& $1000$     & 0.0014 & 0.0081 & 0.0000 & 0.0000 \\       \midrule
			$k = 5$& $5$ & 0.2879 & 0.3540 & 0.2182 & 0.0849 \\
			& $10$       & 0.4418 & 0.4408 & 0.1575 & 0.1472 \\
			& $15$       & 0.4294 & 0.4288 & 0.0903 & 0.1663 \\
			& $25$      & 0.3666 & 0.3939 & 0.0314 & 0.1454 \\  
			& $50$      & 0.2826 & 0.3359 & 0.0028 & 0.0596 \\       
			& $100$     & 0.1894 & 0.2663 & 0.0000 & 0.0059 \\        
			& $500$     & 0.0196 & 0.0762 & 0.0000 & 0.0000 \\        
			& $1000$    & 0.0017 & 0.0211 & 0.0000 & 0.0000 \\        \midrule
			$k = 10$& $5$& 0.3969 & 0.3838 & 0.0909 & 0.0062 \\   
			& $10$      & 0.3984 & 0.4410 & 0.2367 & 0.0214 \\  
			& $15$      & 0.3989 & 0.4414 & 0.3328 & 0.0392 \\   
			& $25$      & 0.3419 & 0.3949 & 0.3783 & 0.0738 \\   
			& $50$      & 0.2883 & 0.3651 & 0.2330 & 0.1419 \\        
			& $100$     & 0.2085 & 0.3094 & 0.0593 & 0.2307 \\        
			& $500$     & 0.0328 & 0.1283 & 0.0000 & 0.2332 \\       
			& $1000$    & 0.0046 & 0.0542 & 0.0000 & 0.0768 \\       \midrule
			$k = 25$& $5$& 0.3477 & 0.1579 & 0.0003 & 0.0000 \\   
			& $10$      & 0.4515 & 0.3293 & 0.0012 & 0.0000 \\  
			& $15$      & 0.4277 & 0.4208 & 0.0027 & 0.0000 \\  
			& $25$      & 0.4301 & 0.4841 & 0.0073 & 0.0000 \\  
			& $50$      & 0.3650 & 0.4718 & 0.0272 & 0.0000 \\       
			& $100$     & 0.3070 & 0.4341 & 0.0917 & 0.0001 \\       
			& $500$     & 0.1204 & 0.3089 & 0.6579 & 0.0011 \\      
			& $1000$    & 0.0477 & 0.2326 & 0.8297 & 0.0025 \\     
			\bottomrule
	\end{tabular}}
\end{table}

From the table, we see that, when $p_{00}$ is large this probability can be
quite substantial, even for large values of $n$ and small $k$. While this effect
seems to lessen as the probability of at least one positive trait increases,
group testing is most commonly used in the context of rare traits.
It is apparent, then, that the problem of boundary estimates will be present in many applications.

Unfortunately, despite the concavity of the log-likelihood, the maximum on the
boundary will not occur at a stationary point. Furthermore, the problem as
previously expressed is not guaranteed to have a unique maximizer over the
boundary. As such, maximizing the likelihood over $\partial \boldsymbol{\Psi}_p$ is a non-trivial optimization problem.

To proceed, the following theorem allows us to reduce the dimension of the parameter space by one, facilitating the use of convex theory results to find the maximizer. 

\begin{theorem}
	\label{md:mle11}
	For all $\mathbf{x} \notin \overline{R}_n$ the log-likelihood given in (\ref{eq:ll}) over
	$\overline{\boldsymbol{\Psi}}_p$ is maximized at a point such that
	$\hat{p}_{11} = 0$. If $\mathbf{x} \in \mathcal{X}_0$ or $x_{11} = 0$ is the only zero, then
	the log-likelihood is uniquely maximized at a point with $\hat{p}_{11} = 0$.
\end{theorem}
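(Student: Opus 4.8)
The plan is to compare the constrained maximum over $\overline{\boldsymbol{\Psi}}_p$ with the unconstrained maximum of the multinomial likelihood, exploiting the concavity of $\ell$ in the $\btheta$-parametrization. One may assume $k\ge 2$, since for $k=1$ one has $\overline R_n=\mathcal X$ and the hypothesis $\mathbf x\notin\overline R_n$ is vacuous. The relevant background facts are: as a function of $\btheta$, $\ell$ is concave on the closed simplex $\overline{\boldsymbol{\Psi}}_{\btheta}$ and is maximized there uniquely at $\hat\btheta=\bar{\mathbf x}$; via the change of variables $h$ of Lemma~\ref{lm:h24}, $\overline{\boldsymbol{\Psi}}_p$ corresponds to $\{\btheta\in\overline{\boldsymbol{\Psi}}_{\btheta}:h_{00}(\btheta),h_{10}(\btheta),h_{01}(\btheta),h_{11}(\btheta)\ge 0\}$, with the face $\{p_{11}=0\}$ corresponding to $\{h_{11}(\btheta)=0\}$; and the hypothesis $\mathbf x\notin\overline R_n$ says exactly that $h_{11}(\bar{\mathbf x})<0$.

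\emph{Existence of a maximizer with $\hat p_{11}=0$.} I would first note that $\ell$ attains its maximum on the compact set $\overline{\boldsymbol{\Psi}}_p$, since it is upper semicontinuous (continuous into $[-\infty,\infty)$, being $-\infty$ where some $\theta_j$ vanishes with $x_j>0$) and finite at interior points. Given any maximizer $\hat{\mathbf p}$ with $\hat p_{11}>0$, put $\hat\btheta=\btheta(\hat{\mathbf p})$ and travel along the segment $\btheta(t)=(1-t)\hat\btheta+t\bar{\mathbf x}$, $t\in[0,1]$, which lies in $\overline{\boldsymbol{\Psi}}_{\btheta}$ by convexity. Because $t\mapsto\ell(\btheta(t))$ is concave and attains its maximum at $t=1$, it is nondecreasing on $[0,1]$. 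Since $h_{11}(\btheta(0))=\hat p_{11}>0>h_{11}(\btheta(1))$, there is a first $t_0\in(0,1)$ with $h_{11}(\btheta(t_0))=0$, and $h_{11}(\btheta(t))\ge 0$ on $[0,t_0]$; moreover $h_{00},h_{10},h_{01}$ at $\btheta(t)$ stay nonnegative automatically, because $\theta_{00}(t)$, $\theta_{00}(t)+\theta_{10}(t)$, $\theta_{00}(t)+\theta_{01}(t)$ are nonnegative convex combinations and $\theta_{10}(t),\theta_{01}(t)\ge 0$. Hence $\mathbf p(t_0):=h(\btheta(t_0))\in\overline{\boldsymbol{\Psi}}_p$ has eleventh coordinate $0$ and $\ell(\mathbf p(t_0))=\ell(\btheta(t_0))\ge\ell(\btheta(0))=\ell(\hat{\mathbf p})$, so it is again a maximizer. (This also shows $\max_{\overline{\boldsymbol{\Psi}}_p}\ell=\max_{\overline{\boldsymbol{\Psi}}_p\cap\{p_{11}=0\}}\ell$, which is the ``reduce the dimension by one'' conclusion.)

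\emph{Uniqueness.} I would observe that any maximizer $\hat{\mathbf p}$ has $\ell(\hat{\mathbf p})>-\infty$, hence $\theta_{00}(\hat{\mathbf p}),\theta_{10}(\hat{\mathbf p}),\theta_{01}(\hat{\mathbf p})>0$ whenever $x_{00},x_{10},x_{01}>0$. In both cases of the theorem this forces $p_{00},p_{10},p_{01}>0$ at every maximizer, i.e.\ every maximizer lies in the convex set $S=\{\mathbf p\in\overline{\boldsymbol{\Psi}}_p:p_{00}>0,p_{10}>0,p_{01}>0\}$, and since $\ell=-\infty$ on $\overline{\boldsymbol{\Psi}}_p\setminus S$ it suffices to show $\ell$ is strictly concave, in the $\mathbf p$-coordinates, on $S$. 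For $\mathbf x\in\mathcal X_0$ this is Lemma~\ref{lm:llcon}(a). If $x_{11}=0$ is the only zero coordinate, the term $x_{11}\log\theta_{11}$ vanishes identically and $\ell$ on $S$ equals $kx_{00}\log p_{00}+x_{10}\log\big((p_{00}+p_{10})^k-p_{00}^k\big)+x_{01}\log\big((p_{00}+p_{01})^k-p_{00}^k\big)$ with all three coefficients positive, and I would prove strict concavity of this directly. Together with the existence step this gives a unique maximizer, necessarily with $\hat p_{11}=0$.

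\emph{Main obstacle.} The delicate point is the strict-concavity claim when $x_{11}=0$, since Lemma~\ref{lm:llcon} supplies only nonstrict concavity outside $\mathcal X_0$ and one must check that deleting the $\theta_{11}$-term creates no flat direction. I would carry this out in the coordinates $a=p_{00}$, $b=p_{00}+p_{10}$, $c=p_{00}+p_{01}$ (an invertible linear change, so concavity is unaffected): the function $\psi(a,b)=\log(b^k-a^k)$ has $\psi_{aa}<0$ and Hessian determinant $k^2(k-1)a^{k-2}b^{k-2}/(b^k-a^k)^2>0$ for $k\ge 2$, hence is strictly concave, and adding $kx_{00}\log a$ and the analogous term in $(a,c)$ keeps the $3\times 3$ Hessian negative definite---the $b$--$c$ entry is zero, which makes signing the leading principal minors routine. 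Two further, routine, points need attention: the upper-semicontinuity/compactness argument used for existence, and the inclusion $\{\theta_{11}=0\}\cap\overline{\boldsymbol{\Psi}}_p\subseteq\{p_{10}=0\}\cup\{p_{01}=0\}$ (so that $S$ really captures every maximizer), which follows from the facts that $(p_{00}+p_{10})^k+(p_{00}+p_{01})^k-p_{00}^k$ is nondecreasing in each of $p_{00},p_{10},p_{01}$ on $\overline{\boldsymbol{\Psi}}_p$ and equals $1$ on the face $\{p_{11}=0\}$ only where $p_{10}p_{01}=0$.
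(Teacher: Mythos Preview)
Your argument is correct and closely parallels the paper's, with two differences worth noting. For the existence of a maximizer with $p_{11}=0$, the paper draws the segment in $\mathbf{p}$-coordinates from the candidate $\mathbf{p}'$ to the unconstrained optimum $\tilde{\mathbf{p}}=h(\bar{\mathbf{x}})$ and invokes Lemma~\ref{lm:llcon} (concavity in $\mathbf{p}$); you instead work in $\btheta$-coordinates, where concavity of the multinomial log-likelihood is immediate, and carry $h$ back only at the crossing time $t_0$. Your route is slightly cleaner since it avoids the chain-rule Hessian argument of Lemma~\ref{lm:llcon} in the existence step, at the cost of having to verify that $h_{00},h_{10},h_{01}$ remain nonnegative along the $\btheta$-segment (which you do correctly). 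For uniqueness when $x_{11}=0$ is the only zero, the paper uses a direct mass-transfer argument (moving $p_{11}$ into $p_{10}$, $p_{01}$, or $p_{00}$), which shows every maximizer must have $p_{11}=0$; you instead establish strict concavity of the reduced log-likelihood via the coordinates $(a,b,c)=(p_{00},p_{00}+p_{10},p_{00}+p_{01})$ and the $2\times 2$ Hessian of $\log(b^k-a^k)$. Your computation is correct (the determinant indeed simplifies to $k^2(k-1)a^{k-2}b^{k-2}/(b^k-a^k)^2$, and the vanishing $b$--$c$ block makes the $3\times 3$ sign pattern decompose as you say), and it actually gives a bit more: genuine uniqueness of the maximizer on $\overline{\boldsymbol{\Psi}}_p$, whereas the paper's mass-transfer step alone only rules out maximizers with $p_{11}>0$. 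Your side remark that the inclusion $\{\theta_{11}=0\}\cap\overline{\boldsymbol{\Psi}}_p\subseteq\{p_{10}=0\}\cup\{p_{01}=0\}$ is needed is slightly mislabeled: it is needed so that Lemma~\ref{lm:llcon}(a) applies on $S$ (i.e., so that $\btheta(\mathbf{p})\in\boldsymbol{\Psi}_\theta$ there), not to show $S$ captures every maximizer, which you had already established from $\theta_{00},\theta_{10},\theta_{01}>0$.
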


As a result of Theorem \ref{md:mle11}, for values $\mathbf{x} \notin \overline{R}_n$ the
objective function given in (\ref{eq:ll})
can be expressed in terms of the simpler two-parameter model in which we seek
to maximize
\begin{equation}
\label{eq:ll2}
\begin{split}
\ell^*(\mathbf{p}^*|\mathbf{x}) &\propto x_{00} \log((1 - p_{10} - p_{01})^k) + x_{10} \log((1 - p_{01})^k
- (1 - p_{10} - p_{01})^k)\\ &+ x_{01} \log((1 - p_{10})^k
- (1 - p_{10} - p_{01})^k)\\ &+ x_{11} \log(1 - (1 - p_{01})^k - (1 - p_{10})^k
+ (1 - p_{10} - p_{01})^k)
\end{split}
\end{equation}
over the set $ \boldsymbol{\Psi}_{p^*} = \{\mathbf{p}^*: \mathbf{1}^\prime \mathbf{p}^*
< 1, 0 \prec \mathbf{p}^* \prec 1\}$ where $\mathbf{p}^* = (p_{10}, p_{01})$.

The following lemma addresses the concavity of this likelihood function. 
\begin{lemma}~\vspace{-15pt}
	\label{lm:llcon2}
	\begin{itemize}
		\item[(a)] For $\mathbf{x} \in \mathcal{X}_0 \cap \overline{R}_n^c$, the
		log-likelihood given in (\ref{eq:ll2}) is strictly
		concave for all $(\mathbf{p}^*, 0) \in \boldsymbol{\Psi}_{\theta}$.
		\item[(b)] For all $\mathbf{x} \in \mathcal{X} \cap \overline{R}_n^c$, the
		log-likelihood given in (\ref{eq:ll2}) is
		concave (not necessarily strict) for all $(\mathbf{p}^*, 0) \in
		\boldsymbol{\Psi}_{\theta}$.
	\end{itemize}
\end{lemma}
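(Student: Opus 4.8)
The plan is to derive Lemma \ref{lm:llcon2} directly from Lemma \ref{lm:llcon}, by recognizing (\ref{eq:ll2}) as nothing more than the log-likelihood (\ref{eq:ll}) restricted to the affine slice $\{p_{11}=0\}$ of its domain. The first step is to verify the underlying algebraic identity: imposing $p_{11}=0$ forces $p_{00}=1-p_{10}-p_{01}$, whence (\ref{eq:theta14})--(\ref{eq:theta24}) collapse to $\theta_{00}=(1-p_{10}-p_{01})^k$, $\theta_{10}=(1-p_{01})^k-(1-p_{10}-p_{01})^k$, $\theta_{01}=(1-p_{10})^k-(1-p_{10}-p_{01})^k$, and $\theta_{11}=1-(1-p_{01})^k-(1-p_{10})^k+(1-p_{10}-p_{01})^k$; substituting these into (\ref{eq:ll}) produces exactly (\ref{eq:ll2}). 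Hence, viewing the log-likelihood (\ref{eq:ll}) as a function of $\mathbf{p}$ through the reparametrization $\btheta(\mathbf{p})$, its restriction to the hyperplane $\{p_{11}=0\}$ is $\ell^*(\mathbf{p}^*|\mathbf{x})$, and the set $\{(\mathbf{p}^*,0)\in\boldsymbol{\Psi}_\theta\}$ over which concavity is asserted is the intersection of the domain of Lemma \ref{lm:llcon} with that hyperplane.

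The second step is the elementary fact that concavity is inherited under restriction to an affine subspace: a function is (strictly) concave exactly when its restriction to every line segment in its domain is (strictly) concave, and every segment lying in the slice $\{p_{11}=0\}$ is automatically a segment of the full domain; equivalently, the Hessian of $\ell^*$ with respect to $(p_{10},p_{01})$ is the leading $2\times 2$ principal submatrix of the Hessian of (\ref{eq:ll}) with respect to $\mathbf{p}=(p_{10},p_{01},p_{11})$ at $p_{11}=0$, so negative (semi)definiteness is preserved. Combining the two steps: for part (a), $\mathcal{X}_0\cap\overline{R}_n^c\subseteq\mathcal{X}_0$, so Lemma \ref{lm:llcon}(a) gives strict concavity of (\ref{eq:ll}) and hence of (\ref{eq:ll2}); for part (b), $\mathcal{X}\cap\overline{R}_n^c\subseteq\mathcal{X}$, so Lemma \ref{lm:llcon}(b) gives concavity of (\ref{eq:ll}) and hence of (\ref{eq:ll2}). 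The side condition $\mathbf{x}\in\overline{R}_n^c$ is not needed for the concavity argument itself; it only records, via Theorem \ref{md:mle11}, the regime in which this two-parameter reduction is the relevant one.

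The one point deserving attention is confirming that the reduction is genuinely to an \emph{affine} slice in the coordinate system in which Lemma \ref{lm:llcon} was established --- which is why it is convenient to have stated Lemma \ref{lm:llcon} (and to work here) in terms of the free parameters $(p_{10},p_{01},p_{11})$, for which $\{p_{11}=0\}$ is a hyperplane, rather than in terms of $\btheta$, for which the corresponding locus is a curved surface. Beyond that I anticipate no real obstacle. A self-contained alternative would be to compute the $2\times 2$ Hessian of (\ref{eq:ll2}) in $(p_{10},p_{01})$ and check directly that it is negative semidefinite, and negative definite when every coordinate of $\mathbf{x}$ is positive; but this would merely reproduce a special case of the computation already behind Lemma \ref{lm:llcon}, so I would not pursue it.
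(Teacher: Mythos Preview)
Your argument is correct but takes a different route from the paper. The paper does not invoke Lemma~\ref{lm:llcon} as a black box; instead it reruns the chain-rule Hessian computation of Lemma~\ref{lm:llcon} with the two-parameter map $\mathbf{p}^*\mapsto\tilde{\btheta}$, writes down the new $3\times 2$ Jacobian $\partial\btheta/\partial\mathbf{p}^*$ explicitly, checks that it has full column rank, and concludes that $-H(\mathbf{p}^*)=J^\top(-H(\tilde{\btheta}))J$ is positive (semi)definite exactly as before. Your approach is more economical: by observing that $\ell^*$ is the restriction of $\ell$ to the affine slice $\{p_{11}=0\}$ and that (strict) concavity survives restriction to an affine subspace, you get the result for free from Lemma~\ref{lm:llcon} without any new computation. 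The paper's approach, on the other hand, is slightly more self-contained in that it stays in the coordinates $(p_{00},p_{10},p_{01})$ used in the original proof and avoids the coordinate-change remark you (correctly) flag at the end. Either way the content is the same; your version just packages it more conceptually.
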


As a result of Lemma \ref{lm:llcon2} we establish that, for any $\mathbf{x} \notin \overline{R}_n$ maximization of (\ref{eq:ll2}) over $\boldsymbol{\Psi}_{p^*}$ can be carried out in a
number of ways with the resulting estimate, combined with $\hat{p}_{11} = 0$, yielding a global maximum of
(\ref{eq:ll}) which is the desired MLE. In the following section, we develop an EM algorithm-based approach to
solving this simplified optimization problem.

\subsection{EM algorithm}
In this section we derive an EM algorithm-based estimator assuming values $\mathbf{x} \notin \overline{R}_n$, for which we know the maximizing value takes $p_{11} = 0$. For the complete
data, we use the true underlying status of each individual in the study.

\begin{result}
  \label{res:EM}
	Beginning with an initial value $\mathbf{p}^{*(0)}$, the estimate from the $t^{th}$ iteration, $t = 1, 2, 3, \ldots$, of the EM algorithm is given by 
	\begin{align}
	\label{eq:em1}
	p_{10}^{(t)} &= \frac{(p_{00}^{(t-1)} + p_{10}^{(t-1)})^{k-1}p_{10}^{(t-1)}}{\theta_{10}^{(t-1)}}\frac{x_{10}}{n} + \frac{[1 - (p_{00}^{(t-1)} + p_{10}^{(t-1)})^{k-1}]p_{10}^{(t-1)}}{\theta_{11}^{(t-1)}}\frac{x_{11}}{n}\\
	p_{01}^{(t)} &= \frac{(p_{00}^{(t-1)} + p_{01}^{(t-1)})^{k-1}p_{01}^{(t-1)}}{\theta_{01}^{(t-1)}}\frac{x_{01}}{n} + \frac{[1 - (p_{00}^{(t-1)} + p_{01}^{(t-1)})^{k-1}]p_{01}^{(t-1)}}{\theta_{11}^{(t-1)}}\frac{x_{11}}{n} \\
	p_{00}^{(t)} &= 1 - p_{10}^{(t)} - p_{01}^{(t)}  \label{eq:em2}
	\end{align}
	
\end{result}

\subsection{Global maximum over the closure}
The previous results can be combined to yield an estimator which gives a global maximizer of the likelihood over
the closure, $\overline{\boldsymbol{\Psi}}_p$. The steps for finding this estimator are given in Algorithm \ref{alg}.

\begin{minipage}[c]{.95\textwidth}
\renewcommand*\footnoterule{}
\begin{algorithm}[H]
	\caption{Global Maximizer}
	\label{alg}
	\begin{algorithmic}[1]
		\If{$\mathbf{x} \in \overline{R}_n$}
		
		\State take $\hat{\mathbf{p}}^{MLE} = h(\mathbf{\bar{x}}),$ where $h(\cdot)$ is as in Lemma \ref{lm:h24}.
		\Else
		
		\State beginning with an initial value $\mathbf{p}^{*{(0)}}$, iterate $\mathbf{p}^{*(t)}, t = 1, 2, 3,
		\ldots$ as in (\ref{eq:em1}) - (\ref{eq:em2})  until convergence\footnote{\footnotesize {In this article, we use a likelihood based convergence criteria (e.g., stop when $|\ell(\mathbf{p}^{*{(t + 1)}}|\mathbf{x}) - \ell(\mathbf{p}^{*{(t)}}|\mathbf{x})| < \epsilon$ for some $\epsilon > 0$), but other criteria may be used as well.} \rule{\textwidth}{.6pt}} of $\ell$ (call the value at convergence
		$\mathbf{p}^{*(\infty)})$;
		
		\State take $\hat{\mathbf{p}}^{MLE} = (\mathbf{p}^{*(\infty)}, 0)$.
		\EndIf
	\end{algorithmic}
\end{algorithm}
\end{minipage}

For $\mathbf{x} \in \overline{R}_n$, this estimator yields the unique global maximizer of the likelihood based
on the invariance property of the MLE and standard multinomial theory.

For other values in the sample space, using the results in \citet{wu1983}, the EM sequence of
estimates will be guaranteed to converge to the global maximizer, provided the sequence of estimates
lies in the interior of the parameter space. In our numerical work, there was not a single case
where the final estimate lay on the boundary of the space. To see why this is true, inspection of
$\overline{R}_n^c$ shows that $\mathbf{x}$ is in this set only if $x_{10}$ and $x_{01}$ are both
non-zero. This is sufficient to guarantee the likelihood is maximized at a point with $p_{10}$ and
$p_{01}$ both positive (otherwise the value of $\ell$ will be $-\infty$). If $x_{00} > 0$, the same
theoretical guarantee can be made for $p_{00}$, so that the maximum must lie in the interior and the
EM algorithm will always converge to this point. If $x_{00} =0$, it is more difficult to show
theoretically that $p_{00}> 0$ at the maximum, but our numerical work shows that, even in this case,
the maximum will tend to occur at very large values of $p_{00}$ (for example, with $n=250$, $k=10$,
and $\mathbf{x} = (100, 100, 50)$, we have $\hat{p}_{00}^{MLE} = 0.82$).

To demonstrate the global convergence property of this estimator, Table \ref{tab:sv} gives estimates
and log-likelihood values for the EM algorithm approach compared with numerical optimization on the
full likelihood using the Nelder Mead algorithm \citep{nelder1965}. This was done for the fixed
values $k= 10$, $n = 35$, $\mathbf{x} = (25, 5, 2) \notin \overline{R}_n$, and ten starting values
randomly generated on the probability simplex. We see that, for many of the starting values, both
algorithms yield identical values, but that the EM algorithm-based estimator is extremely consistent
across all initial points. For the Nelder Mead algorithm, however, there is
variation, with some final estimates far from the true maximizer. While in some cases it may be
possible to make a more informed decision about the starting value, the ability to bypass this step
all together, while still guaranteeing convergence to the global maximum, is a strong advantage of
the EM algorithm-based estimator presented here.

\begin{table}[ht]
	\caption{Comparison of estimates and log-likelihood values for the EM algorithm-based estimator with Nelder Mead optimization for ten randomly generated starting values.}
	\label{tab:sv}
	\centering
	\begin{tabular}{ccccccccc}
		\toprule
		&\multicolumn{4}{c}{EM Algorithm} & \multicolumn{4}{c}{Nelder Mead}\\
		\cmidrule(lr){2-5} \cmidrule(lr){6-9}
		Starting Value & $p_{10}$ & $p_{01}$ & $p_{11}$ & $\ell$ & $p_{10}$ & $p_{01}$ & $p_{11}$ & $\ell$\\ 
		\hline
(0.176  0.270  0.429) & 0.139 & 0.022 & 0.000 & -8.737 & 0.140 & 0.023 & 0.000 & -8.738 \\ 
(0.332  0.349 0.244) & 0.139 & 0.022 & 0.000 & -8.737 & 0.139 & 0.022 & 0.000 & -8.737 \\ 
(0.058  0.192  0.164) & 0.139 & 0.022 & 0.000 & -8.737 & 0.139 & 0.022 & 0.000 & -8.737 \\ 
(0.164  0.329  0.213) & 0.139 & 0.022 & 0.000 & -8.737 & 0.140 & 0.022 & 0.000 & -8.737 \\ 
(0.346  0.133  0.271) & 0.139 & 0.022 & 0.000 & -8.737 & 0.139 & 0.022 & 0.000 & -8.737 \\ 
(0.110  0.339  0.065) & 0.139 & 0.022 & 0.000 & -8.737 & 0.139 & 0.022 & 0.000 & -8.737 \\ 
(0.368 0.013  0.364) & 0.139 & 0.022 & 0.000 & -8.737 & 0.162 & 0.027 & 0.000 & -9.205 \\ 
(0.149  0.210  0.262) & 0.139 & 0.022 & 0.000 & -8.737 & 0.238 & 0.023 & 0.000 & -13.202 \\ 
(0.086  0.380  0.307) & 0.139 & 0.022 & 0.000 & -8.737 & 0.139 & 0.022 & 0.000 & -8.737 \\ 
(0.053  0.355  0.202) & 0.139 & 0.022 & 0.000 & -8.737 & 0.139 & 0.022 & 0.000 & -8.737 \\ 
		\bottomrule
	\end{tabular}
\end{table}

\section{Alternative estimators}
In this section we propose two closed form estimators which are alternatives to the MLE given in the previous section which requires numerical optimization.

\subsection{Restricted method of moments estimator}
The first estimator, which is a method of moments type estimator, is motivated by the result in Theorem
\ref{md:mle11}, and simply truncates the value of $p_{11}$ to 0 for values $\mathbf{x} \notin \overline{R}_n$.
This estimator has several advantages, most notably that it has a simple closed form and, as we will show
empirically, slightly outperforms the MLE in terms of both bias and MSE in most cases.

{\small
	\begin{definition}[Restricted method of moments estimator]
		Let $\hat{\mathbf{p}}^{RMM} = (\hat{p}_{10}^{RMM}, \hat{p}_{01}^{RMM}, \hat{p}_{11}^{RMM})$, where
		\begin{align*}
		\hat{p}_{11}^{RMM} &= \max\left\{0, 1 - \left(\frac{x_{00} + x_{10}}{n}\right)^{1 /k} - \left(\frac{x_{00} + x_{01}}{n}\right)^{1 /k} + \left(\frac{x_{00}}{n}\right)^{1 /k}\right\},\\
		\hat{p}_{10}^{RMM} &= 1 - \left(\frac{x_{00} + x_{01}}{n}\right)^{1 /k} - \hat{p}_{11}^{RMM},\\
		\hat{p}_{01}^{RMM} &= 1 - \left(\frac{x_{00} + x_{10}}{n}\right)^{1 /k} - \hat{p}_{11}^{RMM},\\
		\hat{p}_{00}^{RMM} &= 1 - \hat{p}_{10}^{RMM} - \hat{p}_{01}^{RMM} - \hat{p}_{11}^{RMM}.
		\end{align*}
\end{definition}}

It is not hard to see that $\hat{\mathbf{p}}^{RMM} \in \overline{\boldsymbol{\Psi}}_p$ for all $\mathbf{x}$ and
that $\hat{\mathbf{p}}^{RMM} = \hat{\mathbf{p}}^{MLE} \in  \overline{\boldsymbol{\Psi}}_p$ for all $\mathbf{x} \in
\overline{R}_n$. Further properties showing the relation between the RMM estimator and the MLE are given in Section
\ref{sec:tc}.

\subsection{Burrows type estimator}
One of the main advantages of a closed form estimator as above is the ability to provide simple bias
corrections. The problem of bias for the MLE can be generalized from the single-trait group testing
case, where this issue is a major focus of the literature. A proof that no unbiased estimator exists
for the single-trait group testing problem under a fixed sampling model is given in
\cite{haber2018}, and this can be extended directly to the two-trait model considered in this paper.
The issue of bias for the two-trait case is discussed further in \cite{haberM2018}, where it is
shown that any unbiased estimator found under an alternative sampling plan necessarily yields values
outside the parameter space.

In the one-trait case, Burrows' estimator \cite{burrows1987}  has been shown repeatedly to improve on the MLE
in terms of both bias and MSE \citep[see, for example,][]{hepworth2009, ding2016}. The
motivation is to find a shrinkage estimator of the form $(1 -\alpha \bar{x})^{1/k}$ where $\alpha$
is optimal in the sense of removing bias of $O(1 / n)$ from the estimator. \citet{burrows1987}
showed that this is accomplished by taking $\alpha = \frac{n}{n + \eta}$ where
$\eta = \frac{k-1}{2k}$.

In the two disease case, it can be shown that, when the MLE exists in $\boldsymbol{\Psi}_p$,
applying the identical shrinkage coefficient to each term in the estimator yields the same overall
bias reduction. This is true since, for $\mathbf{x} \in \overline{R}_n$, each term of the MLE as
given in Theorem \ref{md:mle1} is marginally binomially distributed, so that the problem is
identical to that in Burrows' original work. In cases where the MLE does not exist in
$\boldsymbol{\Psi}_p$, we can apply the same correction to the terms of the estimator
$\hat{\mathbf{p}}^{RMM}$ to get a similar approximate result on the non-truncated terms.

\begin{definition}[Burrows type estimator]
	Let $\hat{\mathbf{p}}^{B} = (\hat{p}_{10}^{B}, \hat{p}_{01}^{B}, \hat{p}_{11}^{B})$, where
	\begin{align*}
          \hat{p}_{11}^{B} &= \left\{\begin{array}{lc} 1 - \left(\frac{x_{00} + x_{10} + \eta}{n + \eta}\right)^{1 /k} - \left(\frac{x_{00} + x_{01} + \eta}{n + \eta}\right)^{1 /k} + \left(\frac{x_{00} + \eta}{n + \eta}\right)^{1 /k}, & \mathbf{x} \in \overline{R}_n \\
                                       0, & \text{otherwise},\end{array}\right.\\
          \hat{p}_{10}^{B} &= 1 - \left(\frac{x_{00} + x_{01} + \eta}{n + \eta}\right)^{1 /k} - \hat{p}_{11}^{B},\\
          \hat{p}_{01}^{B} &= 1 - \left(\frac{x_{00} + x_{10} + \eta}{n + \eta}\right)^{1 /k} - \hat{p}_{11}^{B},\\
          \hat{p}_{00}^{B} &= 1 - \hat{p}_{10}^{B} - \hat{p}_{01}^{B} - \hat{p}_{11}^{B},
	\end{align*} with $\eta = \frac{k - 1}{2k}$.
\end{definition}

\subsection{Theoretical comparisons of estimators}
\label{sec:tc}
In this section we provide some of the theoretical properties of the three estimators introduced
above in terms of their large sample properties.

\begin{theorem}~\vspace{-15pt}
  \label{thm:distribution}
	\begin{itemize}
		\item[(a)] $\hat{\mathbf{p}}^{RMM} -  \hat{\mathbf{p}}^{MLE} \overset{a.s.}{\to} 0;$
		\item[(b)]
                  $\sqrt{n}(\hat{\mathbf{p}}^{j} - \mathbf{p}) \overset{d}{\to} N(\mathbf{0}, \frac{1}{k^2}\boldsymbol{\Sigma})$ for $j \in \{MLE, RMM, B\}$ where the elements
                  of $\boldsymbol{\Sigma}$ are given in Appendix \ref{sec:apSigma}.
                  	\end{itemize}
      \end{theorem}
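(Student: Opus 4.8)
\emph{Proof plan.} The plan is to reduce all three estimators to the single smooth statistic $h(\bar{\mathbf{x}})$ of the sample proportions and then apply the delta method; the engine is that the true parameter lies strictly inside $\boldsymbol{\Psi}_p$. First I would invoke the strong law of large numbers to get $\bar{\mathbf{x}} \overset{a.s.}{\to} \btheta$, so that, by continuity of $t \mapsto t^{1/k}$,
\[
\left(\tfrac{x_{00}+x_{10}}{n}\right)^{1/k} + \left(\tfrac{x_{00}+x_{01}}{n}\right)^{1/k} - \left(\tfrac{x_{00}}{n}\right)^{1/k} \;\overset{a.s.}{\longrightarrow}\; 1 - p_{11} < 1,
\]
using $p_{11} > 0$ on $\boldsymbol{\Psi}_p$. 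Hence, on an event of probability one, $\mathbf{x} \in \mathcal{X}_0 \cap R_n$ for all large $n$. On that event Theorem \ref{md:mle1} gives $\hat{\mathbf{p}}^{MLE} = h(\bar{\mathbf{x}})$ and, since $\mathbf{x} \in \overline{R}_n$, also $\hat{\mathbf{p}}^{RMM} = \hat{\mathbf{p}}^{MLE}$; thus $\hat{\mathbf{p}}^{RMM} - \hat{\mathbf{p}}^{MLE}$ is eventually identically zero, which yields part (a) outright.

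For part (b) with $j = MLE$, I would use that $\sqrt{n}(\hat{\mathbf{p}}^{MLE} - \mathbf{p})$ coincides with $\sqrt{n}(h(\bar{\mathbf{x}}) - h(\btheta))$ for all large $n$ on the same almost-sure event, so they share a weak limit. The multinomial central limit theorem gives $\sqrt{n}(\bar{\mathbf{x}} - \btheta) \overset{d}{\to} N(\mathbf{0}, \mathrm{diag}(\btheta) - \btheta\btheta^{\prime})$; the map $h$ of Lemma \ref{lm:h24} is $C^1$ in a neighbourhood of $\btheta$ because the arguments of its $k$-th roots, namely $\theta_{00} = p_{00}^k$, $\theta_{00}+\theta_{10} = (p_{00}+p_{10})^k$ and $\theta_{00}+\theta_{01} = (p_{00}+p_{01})^k$, are strictly positive on $\boldsymbol{\Psi}_p$; and the delta method then yields $\sqrt{n}(\hat{\mathbf{p}}^{MLE} - \mathbf{p}) \overset{d}{\to} N(\mathbf{0}, \mathbf{J}(\mathrm{diag}(\btheta) - \btheta\btheta^{\prime})\mathbf{J}^{\prime})$ with $\mathbf{J}$ the Jacobian of $h$ at $\btheta$. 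Since every coordinate of $h$ is a difference of $k$-th roots, $\mathbf{J} = k^{-1}\widetilde{\mathbf{J}}$, so the covariance is $k^{-2}\boldsymbol{\Sigma}$ with $\boldsymbol{\Sigma} = \widetilde{\mathbf{J}}(\mathrm{diag}(\btheta) - \btheta\btheta^{\prime})\widetilde{\mathbf{J}}^{\prime}$; expanding this product and re-expressing it in terms of $\mathbf{p}$ produces the entries listed in Appendix \ref{sec:apSigma}. For $j = RMM$, part (a) already shows $\sqrt{n}(\hat{\mathbf{p}}^{RMM} - \hat{\mathbf{p}}^{MLE}) \overset{a.s.}{\to} 0$, so Slutsky's theorem gives the same limit.

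For $j = B$, I would observe that on $\overline{R}_n$ the Burrows estimator is exactly $h$ evaluated at the point obtained from $\bar{\mathbf{x}}$ by replacing $x_{00}/n$, $(x_{00}+x_{10})/n$, $(x_{00}+x_{01})/n$ with $(x_{00}+\eta)/(n+\eta)$, $(x_{00}+x_{10}+\eta)/(n+\eta)$, $(x_{00}+x_{01}+\eta)/(n+\eta)$. Each replacement moves its argument by $\eta(n-\cdot)/\{n(n+\eta)\} = O(1/n)$ uniformly, so the perturbed point is within $O(1/n)$ of $\bar{\mathbf{x}}$; a mean value theorem estimate using boundedness of $\nabla h$ near $\btheta$ then gives $\sqrt{n}(\hat{\mathbf{p}}^{B} - h(\bar{\mathbf{x}})) = o_P(1)$ on $\overline{R}_n$, and since $\mathbf{x} \in \overline{R}_n$ eventually almost surely, Slutsky delivers the common limit $N(\mathbf{0}, k^{-2}\boldsymbol{\Sigma})$.

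The main obstacle I anticipate is \emph{not} the delta-method step, which is routine, but making sure the behaviour of the estimators off $\overline{R}_n$ (where the MLE is the boundary EM limit, not a smooth function of $\bar{\mathbf{x}}$, and $\hat{p}_{11}^{B} = 0$) cannot contaminate the limit. The clean way around this is the ``eventually in $R_n$, almost surely'' fact above, which collapses all three estimators onto $h(\bar{\mathbf{x}})$ up to $O(1/n)$ terms and lets every probabilistic subtlety be absorbed into a single Slutsky argument. The only remaining care is the $C^1$-regularity of $h$ at $\btheta$ — which is precisely the strict interiority $0 \prec \mathbf{p} \prec 1$, $\mathbf{1}^{\prime}\mathbf{p} < 1$ built into $\boldsymbol{\Psi}_p$ — together with the bookkeeping of the $k^{-1}$ factors and the sandwich product defining $\boldsymbol{\Sigma}$, deferred to the appendix.
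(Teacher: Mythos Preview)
Your proposal is correct and follows the paper's approach: reduce each estimator to $h(\bar{\mathbf{x}})$ (plus an asymptotically negligible piece coming from the boundary case $\mathbf{x}\notin\overline{R}_n$) and then apply the multinomial CLT with the delta method, extracting the common $k^{-2}$ factor from the Jacobian. The only noteworthy difference is in how the boundary contamination is disposed of: you use almost-sure eventual containment in $\overline{R}_n$ together with Slutsky, whereas the paper writes each estimator as $h_1(\mathbf{x})+h_2(\mathbf{x})I(\mathbf{x}\in\overline{R}_n^c)$ and invokes a separate quantitative bound $P(\mathbf{x}\in\overline{R}_n^c)=O(n^{-2})$ (Lemma~\ref{lm:prn}) plus the Taylor expansions of Lemmas~\ref{lm:g1} and~\ref{lm:g2}; your route is a bit more streamlined for this theorem alone, while the paper's lemmas are set up so they can be reused verbatim in the first-order bias calculation of Theorem~\ref{thm:bias}.
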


			Importantly, this theorem shows that each of the three estimators shares the same asymptotic distribution. 
      The following result gives the first order expectations for each of the three estimators.

      \begin{theorem}~\vspace{-15pt}
        \label{thm:bias}
        {\small
        \begin{itemize}
          \item[(a)]
            \begin{align*}
              \mathrm{E}(\hat{p}_{10}^{MLE}) &= \mathrm{E}(\hat{p}_{10}^{RMM}) = p_{10} + \frac{k - 1}{2k^2n}\left(p_{10} + \frac{1}{p_{00}^{k-1}} - \frac{1}{(p_{00} + p_{10})^{k-1}}\right) + O(n^{-2})\\
              \mathrm{E}(\hat{p}_{01}^{MLE}) &= \mathrm{E}(\hat{p}_{01}^{RMM}) = p_{01} + \frac{k - 1}{2k^2n}\left(p_{01} + \frac{1}{p_{00}^{k-1}} - \frac{1}{(p_{00} + p_{01})^{k-1}}\right) + O(n^{-2})\\
\mathrm{E}(\hat{p}_{11}^{MLE}) &= \mathrm{E}(\hat{p}_{11}^{RMM}) = p_{11} + \frac{k - 1}{2k^2n}\left(p_{11} + \frac{1}{(p_{00} + p_{01})^{k-1}} +  \frac{1}{(p_{00} + p_{10})^{k-1}} - \frac{1}{p_{00}^{k-1}}  - 1\right)\\ &\qquad \qquad \qquad \qquad + O(n^{-2})\\
              \end{align*}
            \item[(b)]
              \begin{align*}
                \mathrm{E}(\hat{p}_{10}^{B}) &= p_{10} + O(n^{-2}) \\
                \mathrm{E}(\hat{p}_{01}^{B}) &= p_{01} + O(n^{-2}) \\
                \mathrm{E}(\hat{p}_{11}^{B}) &= p_{11} + O(n^{-2}) \\
                \end{align*}
            \end{itemize}}
        \end{theorem}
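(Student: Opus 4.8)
The plan is to reduce each expectation to a second‑order moment expansion of $y\mapsto y^{1/k}$ at suitable sample proportions, after discarding an exponentially rare event on which the estimators lack a closed form. Fix $\mathbf p\in\boldsymbol{\Psi}_p$, so $\btheta=\btheta(\mathbf p)\in\boldsymbol{\Psi}_\theta$, and set $G_n=\mathcal X_0\cap\overline R_n$. On $G_n$ all three estimators are given by their closed forms built from $\bar x_{00}=x_{00}/n$, $\bar u=(x_{00}+x_{10})/n$, $\bar v=(x_{00}+x_{01})/n$; by Lemma~\ref{lm:h24} and the discussion preceding Theorem~\ref{md:mle1},
\[
\hat p_{10}^{MLE}=\bar u^{1/k}-\bar x_{00}^{1/k},\qquad \hat p_{01}^{MLE}=\bar v^{1/k}-\bar x_{00}^{1/k},\qquad \hat p_{11}^{MLE}=1-\bar u^{1/k}-\bar v^{1/k}+\bar x_{00}^{1/k},
\]
with $\hat{\mathbf p}^{RMM}=\hat{\mathbf p}^{MLE}$ on $G_n$ and the same identities for $\hat{\mathbf p}^{B}$ after replacing $(x_{00},x_{00}+x_{10},x_{00}+x_{01},n)$ by $(x_{00}+\eta,x_{00}+x_{10}+\eta,x_{00}+x_{01}+\eta,n+\eta)$. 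Since $\btheta\in\boldsymbol{\Psi}_\theta$ (so each merged category probability and $p_{11}$ are strictly positive), a Hoeffding/large‑deviation bound together with continuity gives $P(G_n^c)\le Ce^{-cn}$ for some $c,C>0$. Because all estimators and their closed‑form surrogates take values in the bounded set $\overline{\boldsymbol{\Psi}}_p$, this yields, coordinatewise, $\mathrm E(\hat p_{ij}^{MLE})=\mathrm E(\hat p_{ij}^{RMM})=\mathrm E[\text{(closed form)}]+O(e^{-cn})$ and $\mathrm E(\hat p_{ij}^{B})=\mathrm E[\text{(shifted closed form)}]+O(e^{-cn})$. This proves the claimed equalities $\mathrm E(\hat p_{ij}^{MLE})=\mathrm E(\hat p_{ij}^{RMM})$ up to $O(n^{-2})$ and reduces the theorem to computing expectations of the closed forms over the full sample space.

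\textbf{Step 2 (moment expansion; part (a)).} Merging categories of a multinomial, $x_{00}\sim Bin(n,p_{00}^k)$, $x_{00}+x_{10}\sim Bin(n,(p_{00}+p_{10})^k)$, $x_{00}+x_{01}\sim Bin(n,(p_{00}+p_{01})^k)$. For $Y\sim Bin(n,q)$ with $q\in(0,1)$ and $g(y)=y^{1/k}$, a Taylor expansion of $g$ about $q$, together with $\mathrm E[(\bar Y-q)^2]=q(1-q)/n$, $\mathrm E[(\bar Y-q)^3]=O(n^{-2})$, $\mathrm E[(\bar Y-q)^4]=O(n^{-2})$, the boundedness of $g$ and its derivatives near $q$, the uniform bound $g(\bar Y)\le 1$, and $P(|\bar Y-q|>\delta)\le Ce^{-cn}$, yields
\[
\mathrm E[\bar Y^{1/k}]=q^{1/k}-\frac{k-1}{2k^2}\,\frac{q^{1/k-1}(1-q)}{n}+O(n^{-2}).
\]
Applying this with $q=p_{00}^k,(p_{00}+p_{10})^k,(p_{00}+p_{01})^k$ (so $q^{1/k}=p_{00},p_{00}+p_{10},p_{00}+p_{01}$), substituting into the closed forms of Step 1, and simplifying via $(p_{00}+p_{10})-p_{00}=p_{10}$, $1-(p_{00}+p_{10})-(p_{00}+p_{01})+p_{00}=p_{11}$, $-(p_{00}+p_{10}+p_{01})=p_{11}-1$, etc., gives exactly the expressions in part (a).

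\textbf{Step 3 (the Burrows shift; part (b)).} Write $W=x_{00}+\eta$, $N=n+\eta$. Expanding $\mathrm E[g(W/N)]=g(\mu_n)+\tfrac12 g''(\mu_n)\,\mathrm{Var}(W/N)+O(n^{-2})$ with $\mu_n=\mathrm E(W/N)=\theta_{00}+\eta(1-\theta_{00})/n+O(n^{-2})$ and $\mathrm{Var}(W/N)=\theta_{00}(1-\theta_{00})/n+O(n^{-2})$, and then expanding $g(\mu_n)$ about $\theta_{00}$, the $O(n^{-1})$ term of $\mathrm E[g(W/N)]$ equals $\theta_{00}^{1/k-1}(1-\theta_{00})n^{-1}\bigl(\eta/k-(k-1)/(2k^2)\bigr)$, which vanishes exactly because $\eta=(k-1)/(2k)$. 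Hence $\mathrm E[(W/N)^{1/k}]=p_{00}+O(n^{-2})$, and similarly for the $u$- and $v$-based terms; substituting into the shifted closed forms gives $\mathrm E(\hat p_{ij}^{B})=p_{ij}+O(n^{-2})$.

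\textbf{Main obstacle.} I expect the delicate point to be the bookkeeping needed to make the remainders genuinely $O(n^{-2})$ rather than merely $o(n^{-1})$: this needs both the exponential control of $G_n^c$ (so its contribution is $O(e^{-cn})$) and the facts that the binomial third and fourth central moments are $O(n^{-2})$ and that $y^{1/k}$ is smooth with bounded derivatives away from $0$ (with the exponential tail bound handling the region near $0$), so that all higher‑order Taylor terms are absorbed. The vanishing of the $O(n^{-1})$ bias in Step 3 at $\eta=(k-1)/(2k)$ is the conceptual heart — it is precisely Burrows' device — but is a one‑line computation once the expansion of $\mathrm E[g(W/N)]$ is in place; everything else is routine algebra.
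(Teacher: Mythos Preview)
Your proposal is correct and follows essentially the same route as the paper: decompose each estimator as its closed-form expression plus a bounded correction on the rare event $\overline R_n^c$, then take expectations via a second-order Taylor expansion of $y\mapsto y^{1/k}$ applied to the relevant binomial proportions (the paper packages these expansions as Lemmas~\ref{lm:g1} and~\ref{lm:g2}). Two minor differences in execution: the paper bounds $P(\overline R_n^c)$ only by $O(n^{-2})$ via Markov's inequality with fourth moments (Lemma~\ref{lm:prn}) rather than your exponential Hoeffding bound, and for the Burrows case the paper organizes the double expansion as ``Taylor in the binomial variable, then Taylor in $\eta$ about $0$'' rather than your ``Taylor about $\mu_n$, then $g(\mu_n)$ about $\theta_{00}$''; both orderings yield the same cancellation at $\eta=(k-1)/(2k)$.
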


\section{Numerical comparisons}
In this section we provide numerical comparisons for each of the three estimators introduced here in
terms of relative bias and MSE. For the $i^{th}$ component of $\mathbf{p}$ and an estimator
$\hat{\mathbf{p}}$, the relative bias is defined to be
$100 \times \frac{\mathrm{E}(\hat{p}_i - p_i)}{p_i}$. Results are provided for four values of
$\mathbf{p}$, covering a range of realistic scenarios from very small
($\mathbf{p} = (0.001, 0.001. 0.0001)$) to moderately small $(\mathbf{p} = (0.25, 0.05, 0.15))$.
Larger values of the prevalence parameters are not considered here as it would be uncommon for group
testing to be considered in such cases.

Figures \ref{fig:nc1} and \ref{fig:nc2} give the bias and MSE calculations for $k = 2$ and $k = 10$, respectively, for
a fixed number of tests, $n = 25$. More complete numerical comparisons are provided as tables in Appendix \ref{sec:tables} for $n=10, 25, 50,$ and $100$, as well as two additional prevelence points.

\begin{figure}[!htb]
\includegraphics[width = \linewidth]{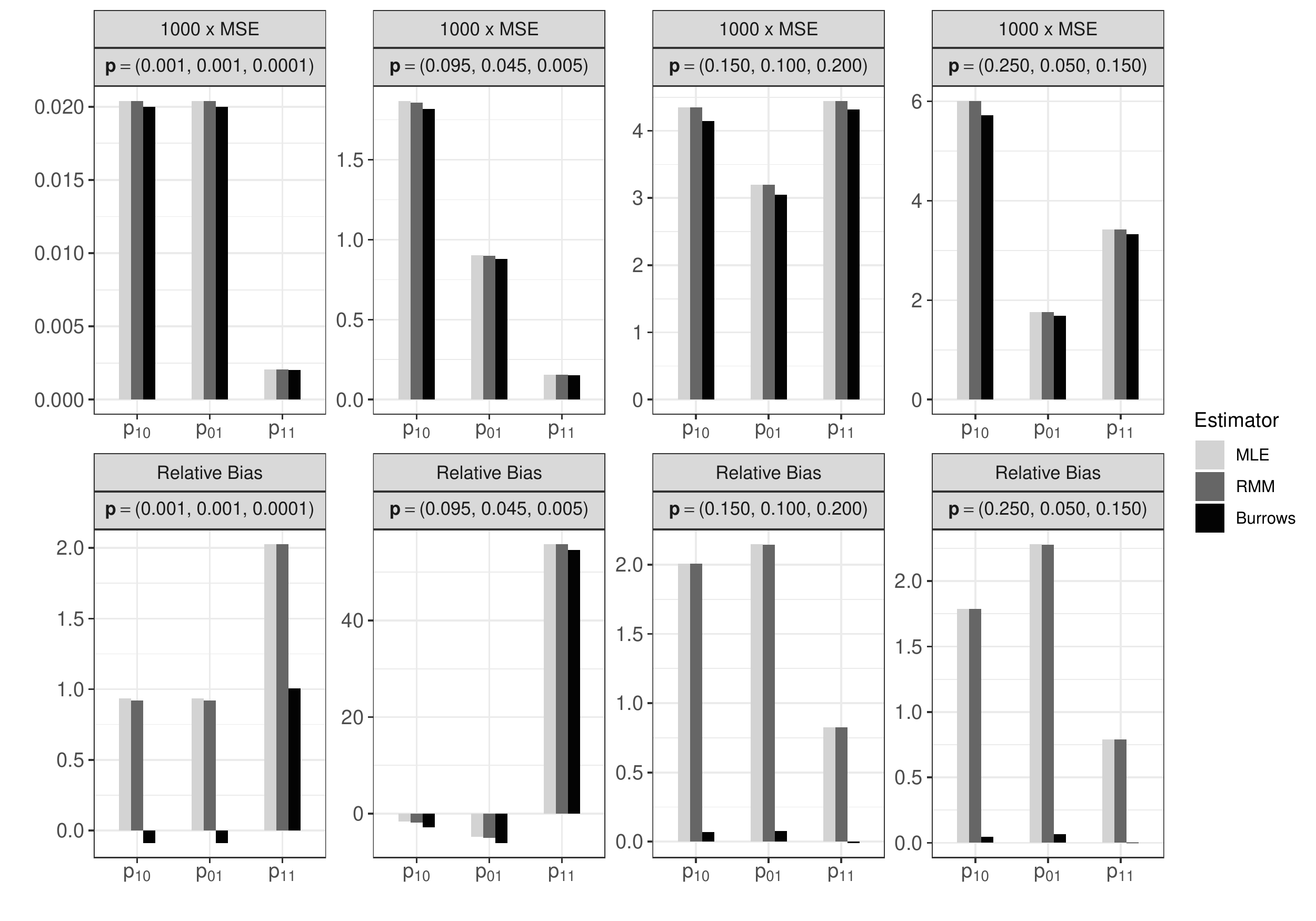}
\caption{Comparisons of 1000 $\times$ mean square error (MSE) and relative bias, defined for the $i^{th}$ element to be $100 \times \frac{\mathrm{E}(\hat{p}_i - p_i)}{p_i}$, for $n = 25$ and $k = 2$. The varying scales in each individual plot should be noted.}
	\label{fig:nc1}
\end{figure}

\begin{figure}[!htb]
\includegraphics[width = \linewidth]{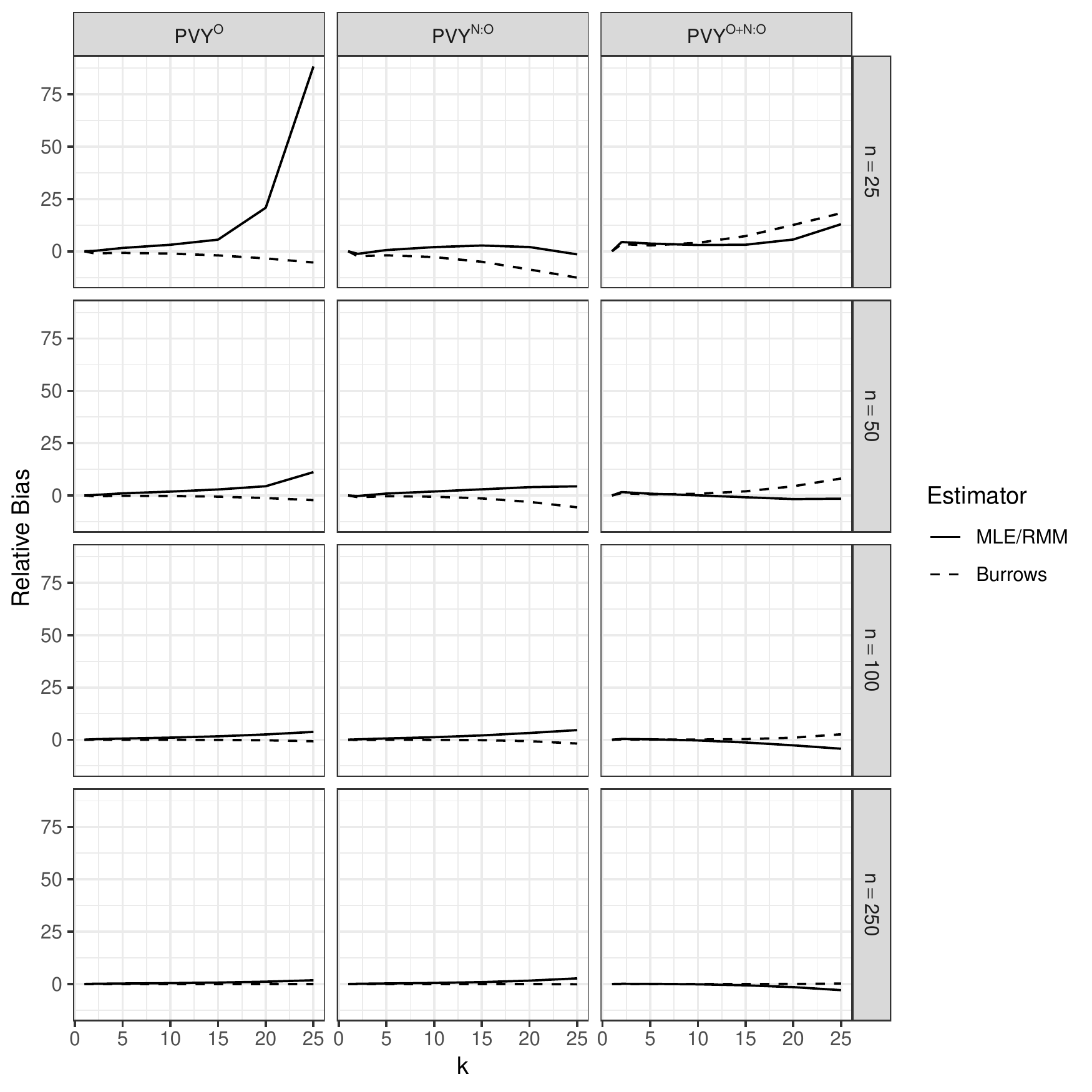}
	\caption{Relative bias, defined for the $i^{th}$ element to be $100 \times \frac{\mathrm{E}(\hat{p}_i - p_i)}{p_i}$, for varying group sizes, $k$, and number of pools tested, $n$, with true $(p_{O}, p_{N:O}, p_{O + N:O})= (0.067,
          0.028, 0.019)$. Note that the values for the MLE and RMM estimators are indistinguishable in this figure.}
	\label{fig:bias1}
\end{figure}

\begin{figure}[!htb]
	\includegraphics[width = \linewidth]{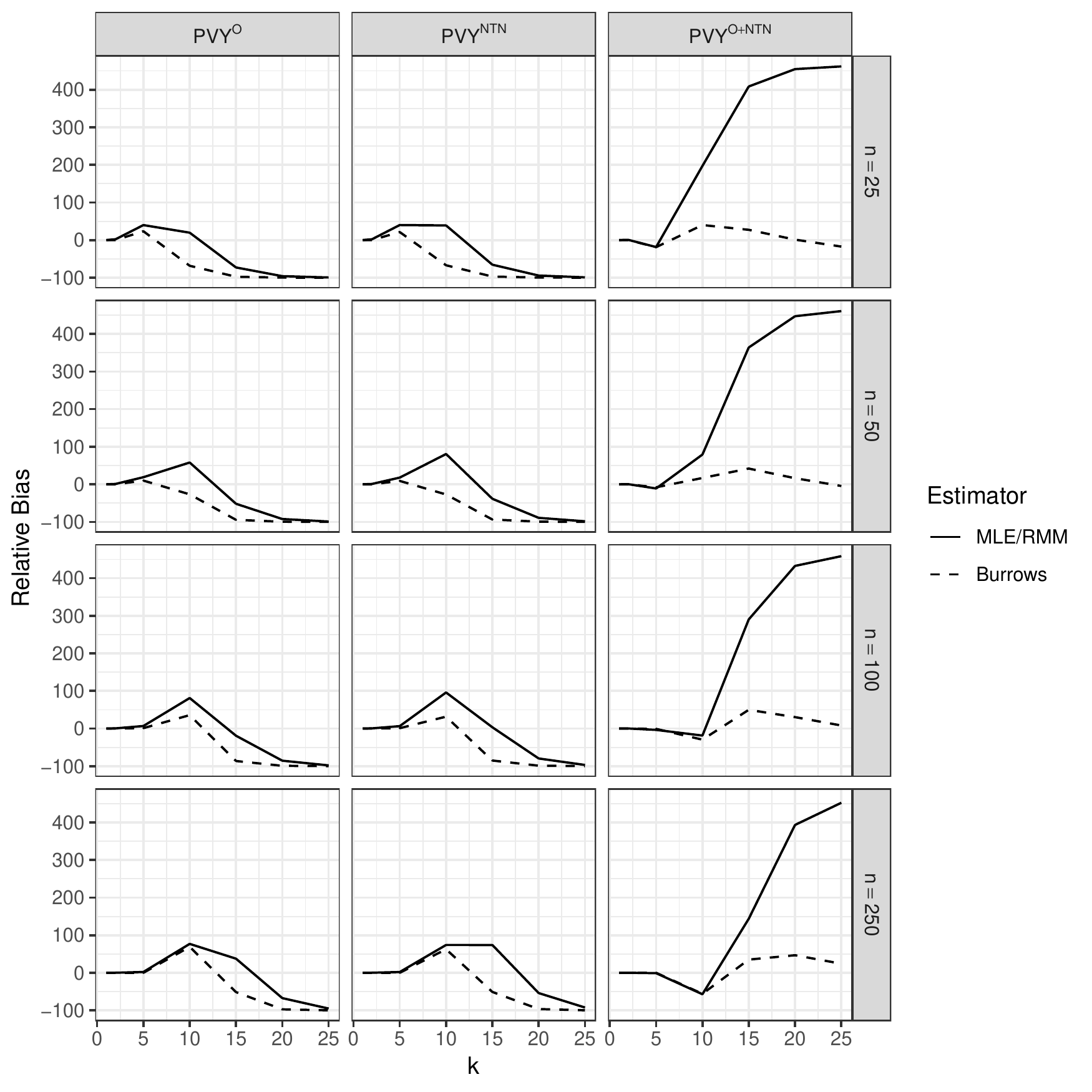}
	\caption{Relative bias, defined for the $i^{th}$ element to be $100 \times \frac{\mathrm{E}(\hat{p}_i - p_i)}{p_i}$, for varying group sizes, $k$, and number of pools tested, $n$, with true $(p_{O}, p_{NTN}, p_{O + NTN})= (0.144,
          0.158, 0.178)$. Note that the values for the MLE and RMM estimators are indistinguishable in this figure.}
	\label{fig:bias2}
      \end{figure}

      \begin{figure}[!htb]
	\includegraphics[width = \linewidth]{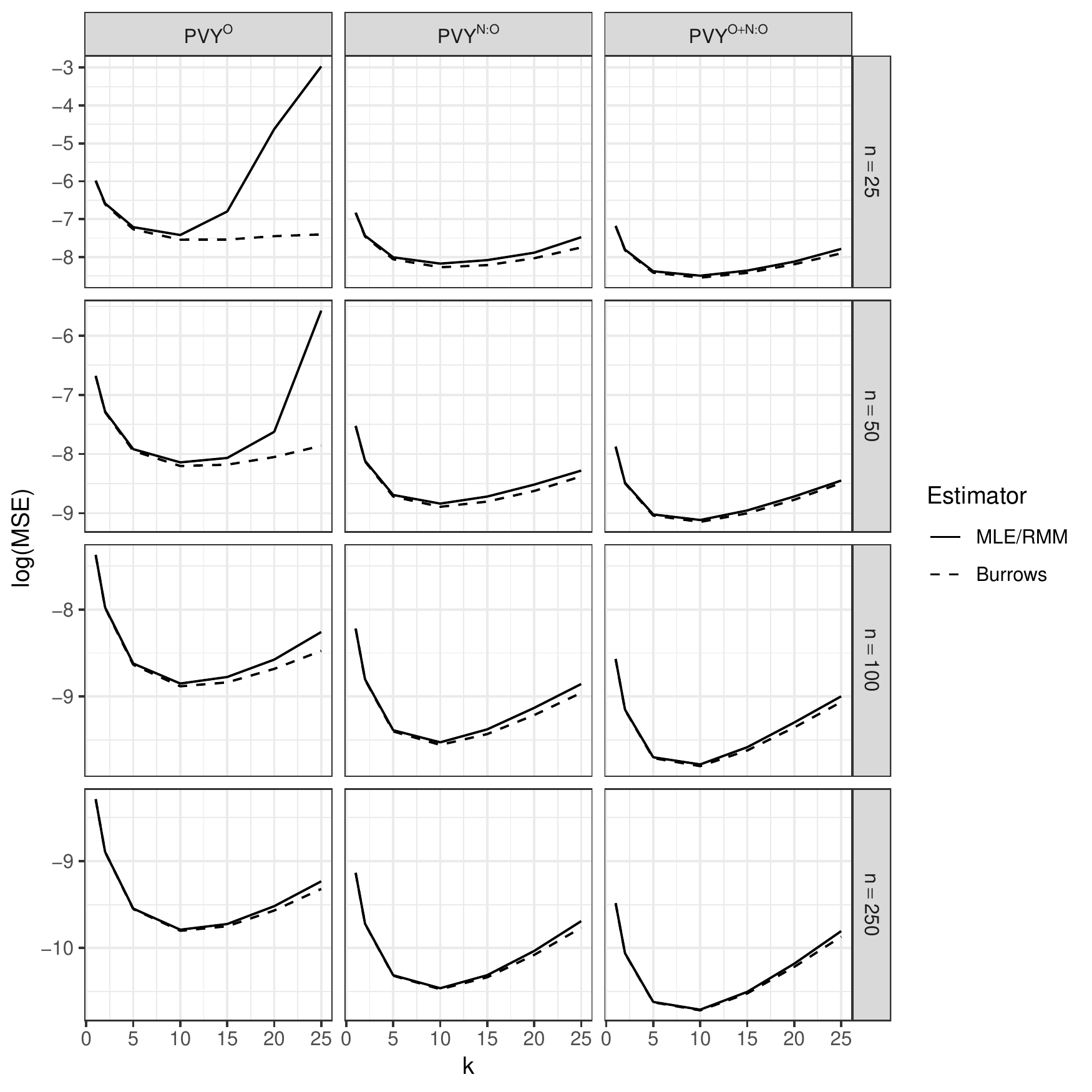}
	\caption{log(MSE) for varying group sizes, $k$, and number of pools tested, $n$, with true $(p_{O}, p_{N:O}, p_{O + N:O})=(0.067, 0.028, 0.019)$. Note that the values for the MLE and RMM estimators are indistinguishable in this figure. The varying scales for the subplots should be noted as well.}
	\label{fig:mse1}
\end{figure}
\begin{figure}[!htb]
	\includegraphics[width = \linewidth]{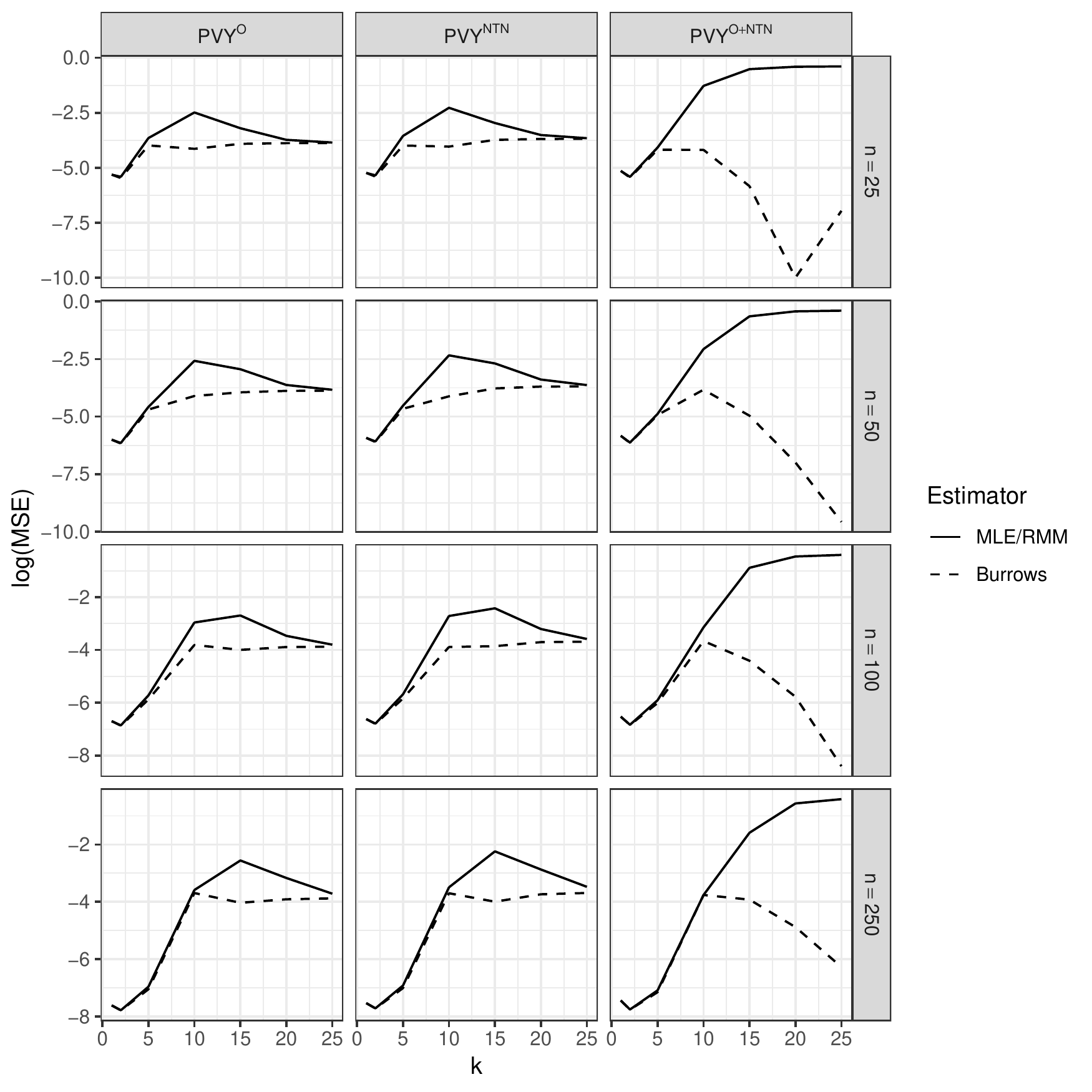}
	\caption{log(MSE) for varying group sizes, $k$, and number of pools tested, $n$, with true $(p_{O},
          p_{NTN}, p_{O + NTN}) = (0.144, 0.158, 0.178)$. Note that the values for the MLE and RMM estimators are
          indistinguishable in this figure. The varying scales for the subplots should be noted as well.}
	\label{fig:mse2}
\end{figure}

From Figure \ref{fig:bias1}, we see that, with the exception of $n=25$, the relative bias is reasonably close to zero for all values of $k$. For the largest number of tests, $n=250$, the bias has all but disappeared. Even for $n=25$ with small $k$, less than or equal to 10, the level of bias is sufficiently small for all estimators. We note that, in this and all subsequent plots, the MLE and RMM values are indistinguishable, which is unsurprising given the results in the previous section.

In Figure \ref{fig:bias2}, where the underlying prevalence is larger,
we see much more variation in the relative bias among estimators and
across $k$. Now, only for small values of $k=1$ or $k=2$ is the bias
near zero. In this case, due to the large value of $p_{O + NTN}$, as
$k$ increases the probability of each test being positive for both
strains of PVY goes to one. As such, we see, for all estimators, the
relative bias for PVY\textsuperscript{O} and PVY\textsuperscript{NTN}
go to $-100$ (since the estimates for each converge to zero). For PVY\textsuperscript{O + NTN}, the MLE and RMM estimates approach $100 \times \left(\frac{1}{0.178} - 1\right) = 461.8$. This is another illustration of the importance of choosing an appropriate pool size, so as to avoid the problem of getting all positive groups \citep[for an excellent discussion of the problem of drawing all positive groups in the one-trait estimation case, see][]{hepworth2009}. While the Burrows type estimator ameliorates this somewhat by shrinking the estimate of $p_{O + NTN}$ towards zero, it does nothing to help the zero estimates for the other parameters.

For the MSE, from Figure \ref{fig:mse1}, we see the $\log(MSE)$ has a similar parabolic shape for each component of
the parameter vector and all $n$. In this particular case, it appears that the same group size, $k = 10$, minimizes
the MSE simultaneously for each parameter and number of tests. By carefully noting the change in scale on the
$y$-axis, it is clear that the MSE is decreasing appreciably with $n$. The values across estimators are similar for
most scenarios, with the Burrows type estimator appearing to offer a slight improvement in MSE. For the
PVY\textsuperscript{O} component, which is the largest among the three, we see that Burrows' estimator is much more
robust to the choice of group size (as seen by the more horizontal nature of its line in the figures), although this advantage decreases with $n$.

For the MSE in the second experiment, Figure \ref{fig:mse2} indicates that, for the MLE and RMM estimators, the minimum value is attained at $k=2$ for all values of $n$ and each parameter. For the Burrows type estimator, this is true for the first two parameters, but not for $p_{O + NTN}$, for which the MSE appears to continue decreasing as a function of $k$ (with the exception of $n=25$, where the minimum occurs at $k=20$). Of course, when considering all three parameters together, the choice of $k=2$ appears ideal, even for the Burrows type estimator. As in Figure \ref{fig:mse2}, the Burrows type estimator outperforms the others, slightly for small $k$ and to a significant degree for $k$ between 10 and 20.

While it is important to look at performance for each component of the parameter vector, the lack of a single index value makes direct quantitative comparison difficult. To address this, we consider looking at the average absolute relative bias and average MSE, where the mean is taken across the three elements of the parameter vector. These values are provided in Tables \ref{tab:avgbias} and \ref{tab:avgmse} for each of the scenarios considered above for the first and second experiments, respectively.

\begin{table}[!htb]
	\centering
	\caption{Average absolute relative bias and average MSE comparisons for PVY example with $(p_{O}, p_{N:O}, p_{O + N:O})= (0.067, 0.028, 0.019)$.}
	\label{tab:avgbias}
	\resizebox{\columnwidth}{!}{%
		\begin{tabular}{llrrrrrrr}
			\toprule
			\multicolumn{3}{l}{Average Absolute Relative Bias}\\
			& & $k=1$ & $k=2$ & $k =5$ & $k=10$ & $k=15$ & $k=20$ & $k=25$ \\ 
			\cmidrule(lr){3-9}
			$n = 25$    & $\hat{\mathbf{p}}^{MLE}$ & 0.000 & 1.938 & 1.942 & 2.691 & 3.804 & 9.472 & 34.247 \\
			&  $\hat{\mathbf{p}}^{RMM}$ & 0.000 & 1.941 & 1.914 & 2.654 & 3.740 & 9.370 & 34.213 \\
			&  $\hat{\mathbf{p}}^{B}$   & 0.000 & 2.265 & 1.833 & 2.636 & 4.751 & 8.244 & 12.063 \\
			$n = 50$   & $\hat{\mathbf{p}}^{MLE}$ & 0.000 & 0.702 & 0.940 & 1.286 & 2.227 & 3.369 & 5.692 \\
			&  $\hat{\mathbf{p}}^{RMM}$ & 0.000 & 0.701 & 0.936 & 1.278 & 2.210 & 3.337 & 5.634 \\
			&  $\hat{\mathbf{p}}^{B}$   & 0.000 & 0.773 & 0.326 & 0.544 & 1.317 & 2.896 & 5.323 \\
			$n = 100$  & $\hat{\mathbf{p}}^{MLE}$ & 0.000 & 0.262 & 0.463 & 0.851 & 1.671 & 2.813 & 4.210 \\
			&  $\hat{\mathbf{p}}^{RMM}$ & 0.000 & 0.261 & 0.463 & 0.851 & 1.668 & 2.806 & 4.194 \\
			&  $\hat{\mathbf{p}}^{B}$   & 0.000 & 0.091 & 0.020 & 0.044 & 0.189 & 0.649 & 1.703 \\
			$n = 250$ & $\hat{\mathbf{p}}^{MLE}$ & 0.000 & 0.104 & 0.184 & 0.353 & 0.736 & 1.401 & 2.470 \\
			&  $\hat{\mathbf{p}}^{RMM}$ & 0.000 & 0.104 & 0.184 & 0.353 & 0.736 & 1.401 & 2.469 \\
			&  $\hat{\mathbf{p}}^{B}$   & 0.000 & 0.000 & 0.000 & 0.001 & 0.001 & 0.017 & 0.136 \\  \midrule
			\multicolumn{3}{l}{Average $1000\times$MSE}\\
			& & $k=1$ & $k=2$ & $k =5$ & $k=10$ & $k=15$ & $k=20$ & $k=25$ \\ 
			\cmidrule(lr){3-9}
			$n = 25$  & $\hat{\mathbf{p}}^{MLE}$ & 1.451 & 0.792 & 0.433 & 0.362 & 0.553 & 3.491 & 17.432 \\
			&  $\hat{\mathbf{p}}^{RMM}$ & 1.451 & 0.790 & 0.433 & 0.361 & 0.550 & 3.487 & 17.427 \\
			&  $\hat{\mathbf{p}}^{B}$   & 1.451 & 0.773 & 0.411 & 0.327 & 0.341 & 0.394 & 0.470 \\
			$n = 50$  & $\hat{\mathbf{p}}^{MLE}$ & 0.725 & 0.398 & 0.218 & 0.182 & 0.202 & 0.284 & 1.418 \\
			&  $\hat{\mathbf{p}}^{RMM}$ & 0.725 & 0.398 & 0.217 & 0.182 & 0.202 & 0.283 & 1.416 \\
			&  $\hat{\mathbf{p}}^{B}$   & 0.725 & 0.393 & 0.212 & 0.173 & 0.184 & 0.218 & 0.274 \\
			$n = 100$ & $\hat{\mathbf{p}}^{MLE}$ & 0.363 & 0.200 & 0.109 & 0.091 & 0.103 & 0.130 & 0.175 \\
			&  $\hat{\mathbf{p}}^{RMM}$ & 0.363 & 0.200 & 0.109 & 0.091 & 0.103 & 0.129 & 0.175 \\
			&  $\hat{\mathbf{p}}^{B}$   & 0.363 & 0.199 & 0.107 & 0.088 & 0.097 & 0.119 & 0.151 \\
			$n =250$ &  $\hat{\mathbf{p}}^{MLE}$ & 0.145 & 0.080 & 0.043 & 0.036 & 0.040 & 0.052 & 0.072 \\
			&  $\hat{\mathbf{p}}^{RMM}$ & 0.145 & 0.080 & 0.043 & 0.036 & 0.040 & 0.052 & 0.072 \\
			&  $\hat{\mathbf{p}}^{B}$   & 0.145 & 0.080 & 0.043 & 0.035 & 0.039 & 0.049 & 0.066 \\
			\bottomrule
	\end{tabular}}
\end{table}

\begin{table}[!htb]
	\centering
	\caption{Average absolute relative bias and average MSE comparisons for PVY example with $(p_{O}, p_{NTN}, p_{O
			+ NTN})= (0.144, 0.158, 0.178)$.}
	\label{tab:avgmse}
	\resizebox{\columnwidth}{!}{%
		\begin{tabular}{llrrrrrrr}
			\toprule
			\multicolumn{3}{l}{Average Absolute Relative Bias}\\
			& & $k=1$ & $k=2$ & $k =5$ & $k=10$ & $k=15$ & $k=20$ & $k=25$ \\ 
			\cmidrule(lr){3-9}
			$n = 25$    & $\hat{\mathbf{p}}^{MLE}$ & 0.000 & 1.645 & 32.815 & 85.262 & 182.278 & 215.017 & 220.157 \\
			&  $\hat{\mathbf{p}}^{RMM}$ & 0.000 & 1.642 & 32.613 & 85.218 & 182.279 & 215.017 & 220.157 \\
			&  $\hat{\mathbf{p}}^{B}$   & 0.000 & 0.057 & 21.208 & 58.466 & 73.960 & 66.875 & 72.426 \\
			$n = 50$   & $\hat{\mathbf{p}}^{MLE}$ & 0.000 & 0.784 & 15.950 & 72.532 & 151.365 & 209.528 & 219.343 \\
			&  $\hat{\mathbf{p}}^{RMM}$ & 0.000 & 0.784 & 15.865 & 72.548 & 151.368 & 209.528 & 219.343 \\
			&  $\hat{\mathbf{p}}^{B}$   & 0.000 & 0.014 & 8.960 & 23.514 & 76.680 & 71.713 & 68.104 \\
			$n = 100$  & $\hat{\mathbf{p}}^{MLE}$ & 0.000 & 0.384 & 5.607 & 65.263 & 104.234 & 199.209 & 217.765 \\
			&  $\hat{\mathbf{p}}^{RMM}$ & 0.000 & 0.384 & 5.594 & 65.363 & 104.234 & 199.209 & 217.765 \\
			&  $\hat{\mathbf{p}}^{B}$   & 0.000 & 0.003 & 1.018 & 32.305 & 73.750 & 76.057 & 69.307 \\
			$n = 250$ & $\hat{\mathbf{p}}^{MLE}$ & 0.000 & 0.152 & 1.738 & 69.306 & 85.050 & 171.500 & 213.232 \\
			&  $\hat{\mathbf{p}}^{RMM}$ & 0.000 & 0.152 & 1.738 & 69.409 & 85.062 & 171.502 & 213.232 \\
			&  $\hat{\mathbf{p}}^{B}$   & 0.000 & 0.001 & 0.033 & 62.471 & 45.787 & 80.101 & 74.528 \\ \midrule
			\multicolumn{3}{l}{Average $1000\times$MSE}\\
			& & $k=1$ & $k=2$ & $k =5$ & $k=10$ & $k=15$ & $k=20$ & $k=25$ \\ 
			\cmidrule(lr){3-9}
			$n = 25$  & $\hat{\mathbf{p}}^{MLE}$ & 5.368 & 4.565 & 23.864 & 154.920 & 229.355 & 239.482 & 240.586 \\
			&  $\hat{\mathbf{p}}^{RMM}$ & 5.368 & 4.563 & 23.753 & 154.905 & 229.355 & 239.482 & 240.586 \\
			&  $\hat{\mathbf{p}}^{B}$   & 5.368 & 4.359 & 17.461 & 16.267 & 15.627 & 15.219 & 15.584 \\
			$n = 50$  & $\hat{\mathbf{p}}^{MLE}$ & 2.684 & 2.199 & 9.558 & 99.728 & 215.451 & 237.782 & 240.389 \\
			&  $\hat{\mathbf{p}}^{RMM}$ & 2.684 & 2.199 & 9.508 & 99.740 & 215.451 & 237.782 & 240.389 \\
			&  $\hat{\mathbf{p}}^{B}$   & 2.684 & 2.154 & 8.617 & 18.140 & 16.418 & 15.440 & 15.282 \\
			$n = 100$ & $\hat{\mathbf{p}}^{MLE}$ & 1.342 & 1.082 & 3.134 & 53.677 & 189.030 & 234.070 & 239.949 \\
			&  $\hat{\mathbf{p}}^{RMM}$ & 1.342 & 1.082 & 3.125 & 53.740 & 189.026 & 234.070 & 239.949 \\
			&  $\hat{\mathbf{p}}^{B}$   & 1.342 & 1.071 & 2.726 & 22.721 & 17.202 & 16.035 & 15.318 \\
			$n =250$ &  $\hat{\mathbf{p}}^{MLE}$ & 0.537 & 0.429 & 0.921 & 27.087 & 129.438 & 222.342 & 238.479 \\
			&  $\hat{\mathbf{p}}^{RMM}$ & 0.537 & 0.429 & 0.921 & 27.156 & 129.445 & 222.341 & 238.479 \\
			&  $\hat{\mathbf{p}}^{B}$   & 0.537 & 0.427 & 0.850 & 24.285 & 18.635 & 17.142 & 15.823 \\
			\bottomrule
	\end{tabular}}
\end{table}

Table \ref{tab:avgbias} shows that, for all estimators, the smallest average MSE value occurs when $k=10$. This adds support to what was observed in Figure \ref{fig:mse1}. With this group size, the average MSE is $4.4$ times smaller when $n=25$ than the value when group testing is not used. For $n=250$, this decreases to $4.1$, which is still a very large gain in efficiency. This is achieved with only a small increase in the bias with, for example, the Burrows type estimator yielding relative bias values of $2.636$ for $n=25$ and $0.001$ when $n=250$. This shows a clear advantage to using group testing in similar applications. Even if the ideal group size of $10$ was not chosen, we still see a large benefit for all estimators when $k=2$ or $k=5$. For the Burrows type estimator, even if too large of a group size, such as $k=25$, were chosen, the result is still much better than when group testing is not used. For example, with $n=50$ and $k=25$, the Burrows type estimator yields an average MSE $2.6$ times smaller than that achieved without group testing, and an average absolute relative bias of only $5.3$. Of course, in the same scenario if the MLE were used instead, the average MSE would be nearly double that of the non-group testing case, showing again the benefit of the Burrows type estimator.

In Table \ref{tab:avgmse} we again see, similar to what was indicated in Figure \ref{fig:mse2}, that the best choice of group size for the second experiment is $k=2$. In this case, however, the gains relative to the non grouping case are much more modest with, for example, the average MSE for the Burrows type estimator only $1.2$ times smaller when $n = 25$. 

\section{Discussion}
In this paper, we have addressed the problem of prevalence estimation for two-traits simultaneously using group-testing methods. While a solution for large samples is straightforward (as shown in Theorem \ref{md:mle1}), the high probability of the MLE lying on the boundary for small sample cases requires more careful consideration. We have shown that (see Table \ref{tab:prob1}), depending on the underlying prevalence, this problem can be substantial, even for sample sizes that are much larger than would be feasible in many applications.

The problem of finding an MLE has been shown as a special case of maximizing a multinomial likelihood with a restricted parameter space. While this is, in general, a difficult problem, we showed that, whenever a closed form maximizer does not exist, the optimization problem can be expressed as a simpler problem in one fewer dimension. This was used to develop an estimation approach based on the EM algorithm which is simple both conceptually and computationally. More importantly, the resulting estimator is guaranteed to converge to the global maximizer for all values in the sample space.

In addition, we have provided a second estimator, based on the method of moments, which very closely approximates the MLE, but has the advantage of a closed form expression. This, in turn, was used to develop a third estimator based on the shrinkage estimator for one-trait group testing estimation presented in \citet{burrows1987}.

Among the three estimators, numerical comparisons showed that none uniformly outperforms the others in terms of relative bias and MSE. Still, the Burrows type estimator does generally offer some advantage in terms of MSE, and has the added benefit of being robust to poor specification of the group size. As such, this estimator can be recommended as the best choice in most cases, including those similar to the PVY strain prevalence estimation application considered here.

For many realistic cases, such as with small $\mathbf{p}$ and moderate $n$, our numerical results indicate that the
bias is well contained even though the number of tests is too small to rely on large sample results. It is in these
cases that the importance of the results presented here is highlighted, since the use of the large sample MLE or
direct numerical optimization can yield poor results. For other cases, such as with very small $n$ or larger
parameter values, the bias is much larger and the estimators presented here are not ideal. Unfortunately, there
exist no alternatives in the literature to date for addressing these issues, so that future research in this area
is essential.

One important extension that we have not considered here is estimation when tests are subject to misclassification.
Among many other examples, group testing studies incorporating testing errors can be found in \citet{tu1995},
\citet{liu2012}, and \citet{zhang2014}. While very common in medical studies, this issue appears to be less
commonly considered in plant science areas such as the PVY application presented here. For example, none of the
previously cited papers in the PVY literature \citep[][]{gray2010, mello2011, mondal2017} included
misclassification parameters in their statistical models, although exceptions do exist \citep[see, as one
example,][]{liu2011}. Reasons for this include assumptions that the testing errors are negligible in a given
application, as well as sample sizes that are too small for simultaneously estimating the usually unknown
misclassification parameters. When testing errors are incorporated into the model, the results
presented in this paper which greatly simplify estimation (e.g., Theorem \ref{md:mle11}) do not hold, so that
different approaches must be taken. It is still possible to numerically optimize the likelihood in such cases,
although care must be taken with the starting values when this is done.

A second area which has not been considered here, is how to best choose the group size $k$. In the one-trait estimation case, this has proven to be a very difficult problem with solutions requiring either reasonably precise prior knowledge of the true parameter value \citep[e.g.,][]{swallow1985} or adaptive approaches \citep[e.g.,][]{hos1994}. For two traits, an adaptive approach based on optimal design theory has been given \citep{ho2000}, but relies heavily on large sample assumptions. Numerical studies or theoretical results for constructing locally optimum designs based on prior information have, to our knowledge, not been done. Further research is necessary to provide reasonable small sample solutions to this problem in order to realize the full benefits of group-testing for a wide range of applications.

\appendix

\section{Large sample covariance matrix for all estimators}
\label{sec:apSigma}
Letting $\lambda_{10} = p_{00} + p_{10}$ and $\lambda_{01} = p_{00} + p_{01}$, the elements of $\boldsymbol{\Sigma}$ as defined in Theorem \ref{thm:distribution} (b) are given by:
                  {\small
                  \begin{align*}
                    \boldsymbol{\Sigma}_{11} &= p_{10}^2\left(\frac{1}{\lambda_{10}^k} - 1\right) + p_{00}^2\left(\frac{1}{p_{00}^k} - \frac{1}{\lambda_{10}^k}\right) \\
                    \boldsymbol{\Sigma}_{22} &= p_{01}^2\left(\frac{1}{\lambda_{01}^k} - 1\right) + p_{00}^2\left(\frac{1}{p_{00}^k} - \frac{1}{\lambda_{01}^k}\right) \\
                    \boldsymbol{\Sigma}_{21} &= p_{10}p_{01}\left(\frac{p_{00}^k}{\lambda_{10}^{k}\lambda_{01}^{k}} - 1\right) + p_{00}p_{10}\left(\frac{p_{00}^k}{\lambda_{10}^{k}\lambda_{01}^{k}} - \frac{1}{\lambda_{10}^k} \right)
                                               + p_{00}p_{01}\left(\frac{p_{00}^k}{ \lambda_{10}^{k}\lambda_{01}^{k}} - \frac{1}{\lambda_{01}^k} \right)\\
                    &\qquad + p_{00}^2\left(\frac{p_{00}^k}{\lambda_{10}^{k}\lambda_{01}^{k}} - \frac{1}{\lambda_{10}^{k}} - \frac{1}{\lambda_{01}^{k}} + \frac{1}{p_{00}^{k}}\right)\\
                    \boldsymbol{\Sigma}_{31} &= p_{10}^2\left(1 - \frac{1}{\lambda_{10}^k}\right) + (p_{10}p_{01} + p_{00}p_{10})\left(1 - \frac{p_{00}^k}{\lambda_{10}^k\lambda_{01}^k}\right)
                                               + p_{00}p_{01}\left(\frac{1}{\lambda_{01}^k} - \frac{p_{00}^k}{\lambda_{10}^k\lambda_{01}^k}\right) \\
                                               &\qquad + p_{00}^2\left(\frac{1}{\lambda_{10}^k}  + \frac{1}{\lambda_{01}^k} - \frac{p_{00}^k}{\lambda_{10}^k\lambda_{01}^k} - \frac{1}{p_{00}^k}\right)\\
                    \boldsymbol{\Sigma}_{32} &= p_{01}^2\left(1 - \frac{1}{\lambda_{01}^k}\right) + (p_{10}p_{01} + p_{00}p_{01}) \left(1 - \frac{p_{00}^k}{\lambda_{10}^k\lambda_{01}^k}\right)
                                               + p_{00}p_{10}\left(\frac{1}{\lambda_{10}^k} - \frac{p_{00}^k}{\lambda_{10}^k\lambda_{01}^k}\right) \\
                                               &\qquad + p_{00}^2\left(\frac{1}{\lambda_{10}^k}  + \frac{1}{\lambda_{01}^k} - \frac{p_{00}^k}{\lambda_{10}^k\lambda_{01}^k} - \frac{1}{p_{00}^k}\right) \\
                    \boldsymbol{\Sigma}_{33} &= p_{10}^2\left(\frac{1}{\lambda_{10}^k} - 1\right) + p_{01}^2\left(\frac{1}{\lambda_{01}^k} - 1\right)
                                               + 2(p_{10}p_{01} + p_{00}p_{10} + p_{00}p_{01})\left(\frac{p_{00}^k}{\lambda_{10}^k\lambda_{01}^k} - 1\right) \\
                    &\qquad +  p_{00}^2\left(\frac{2p_{00}^k}{\lambda_{10}^k\lambda_{01}^k} + \frac{1}{p_{00}^k} - \frac{1}{\lambda_{10}^k} - \frac{1}{\lambda_{01}^k} - 1\right).
                    \end{align*}}
\section{Proofs}
\label{sec:ap}

\subsection{Proof of Lemma \ref{lm:llcon}}

(a) We proceed by showing the negative Hessian, $-H(\mathbf{p})$, to be
positive definite. For convenience, we consider the parameter vectors
$\tilde{\btheta} = (\theta_{00}, \theta_{10}, \theta_{01}) = (p_{00}^k, (p_{00} + p_{10})^k - p_{00}^k, (p_{00} +
p_{01})^k - p_{00}^k)$ and $\tilde{\mathbf{p}} = (p_{00}, p_{10}, p_{01})$.
For the standard multinomial vector, we have $-H(\tilde{\btheta}) = \mathbf{D} +
\frac{x_{11}}{\theta_{11}^2} \mathbf{11^\prime}$ where $\mathbf{D} =
diag\left(\frac{x_{00}}{\theta_{00}^2}, \frac{x_{10}}{\theta_{10}^2},
\frac{x_{01}}{\theta_{01}^2}\right)$ which is positive definite since, for
any $\mathbf{z} = (z_1, z_2, z_3) \in \mathbb{R}^3$ such that $\mathbf{z} \neq \mathbf{0}$,
\begin{equation}
-\mathbf{z}^\prime H(\tilde{\btheta}) \mathbf{z} =
\frac{x_{00}z_1^2}{\theta_{00}^2} + \frac{x_{10}z_2^2}{\theta_{10}^2} +
\frac{x_{01}z_3^2}{\theta_{01}^2} + \frac{x_{11}(\mathbf{1^\prime
		z})^2}{\theta_{11}^2} > 0.
\end{equation}

Now, $H(\tilde{\mathbf{p}}) = \frac{\partial \btheta}{\partial
	\mathbf{p}}^\prime H(\tilde{\btheta}) \frac{\partial \btheta}{\partial
	\mathbf{p}}$, where
\[\frac{\partial \btheta}{\partial
	\mathbf{p}} = k\left( \begin{array}{ccc} p_{00}^{k-1} & 0 & 0 \\ (p_{00} +
p_{10})^{k-1} - p_{00}^{k-1} & (p_{00} +
p_{10})^{k-1} & 0
\\
(p_{00} + p_{01})^{k-1} - p_{00}^{k -1} & 0 &
(p_{00}
+
p_{01})^{k-1}
\\
\end{array}\right),\]
and so
\[-\mathbf{z}^\prime H(\tilde{\mathbf{p}}) \mathbf{z} = -\mathbf{z^*}^\prime
H(\tilde{\btheta}) \mathbf{z^*} > 0\] by (4) provided
\begin{equation}
\label{eq:zstar}
\mathbf{z^*} = \frac{\partial \btheta}{\partial
	\mathbf{p}} \mathbf{z} \neq \mathbf{0}.
\end{equation}

Since $\frac{\partial \btheta}{\partial
	\mathbf{p}}$ is full rank, (\ref{eq:zstar}) holds whenever $\mathbf{z}
\neq \mathbf{0}$ and the result follows.

(b) If $\mathbf{x} \in \mathcal{X} \cap \mathcal{X}_0^c$, so that at least
one element is zero, then the inequality in (4) is replaced
by \[-\mathbf{z}^\prime H(\tilde{\btheta}) \mathbf{z}  \geq 0.\] The
remainder of the proof is identical to that of (a) with the result that
$-H(\mathbf{p})$ is positive semi-definite, hence the log-likelihood is
concave (though not necessarily strict).

\subsection{Proof of Theorem \ref{md:mle1}}
  (a) This follows directly from the previous discussion using the invariance property based on the multinomial MLE.
	
	(b) Let $g(\mathbf{x}) = \left(\frac{x_{00} + x_{10}}{n}\right)^{1 /k} + \left(\frac{x_{00} + x_{01}}{n}\right)^{1 /k} - \left(\frac{x_{00}}{n}\right)^{1 /k}$. Then, by the strong law of large numbers \[g(\mathbf{x}) \overset{a.s.}{\to} \left(\theta_{00} + \theta_{10}\right)^{1 /k} + \left(\theta_{00} + \theta_{01}\right)^{1 /k} - \left(\theta_{00}\right)^{1 /k} = p_{00} + p_{10} + p_{01} < 1,\] so that for large enough $n$, $P(\mathbf{x} \in R_n) = 1$. Likewise, $\mathbf{x} / n \overset{a.s.}{\to} \btheta$ implies that $\mathbf{x}$ lies in the interior of its support with probability one. That is, for large enough $n$, $P(\mathbf{x} \in \mathcal{X}_0) = 1$. It follows then that $P(\mathbf{x} \in \mathcal{X}_0 \cap R_n) = 1$.

	\subsection{Proof of Theorem \ref{md:mle11}}
  For all $\mathbf{x}$, the invariance property of the MLE gives the unique maximizer over $\boldsymbol{\Psi}_\theta$ to be $\tilde{\mathbf{p}} = h(\bar{\mathbf{x}})$, where $h$ is as in Lemma \ref{lm:h24}.
	If $\mathbf{x} \in \mathcal{X}_0 \cap \overline{R}_n^c$ then we have $\tilde{p}_{11} < 0$, by the definition of $\overline{R}_n$. Furthermore, by Lemma \ref{lm:llcon} (a), the log-likelihood, $\ell$, is strictly concave over this set and, by (\ref{eq:theta14}) and (\ref{eq:theta24}), each of $\tilde{p}_{00}, \tilde{p}_{10},$ and $\tilde{p}_{01}$ are non-negative (since otherwise, the corresponding $\theta$ values would be negative, hence not in $\boldsymbol{\Psi}_\theta$). Suppose now that $\mathbf{p}^\prime \in \overline{\boldsymbol{\Psi}}_p$ is the true maximizer over the boundary and satisfies $p^\prime_{11} > 0$. Then, since $\tilde{\mathbf{p}}$ is a global maximum and $\ell$ is strictly concave, the line segment $\ell(t\tilde{\mathbf{p}} + (1 - t)\mathbf{p}^\prime),\ 0 \leq t \leq 1$ is strictly decreasing as $t \to 0$. However, there exists a point in $\partial \boldsymbol{\Psi}$ on the line with $t > 0$, say $\mathbf{p}^{\prime\prime}$, satisfying $p^{\prime\prime}_{11} = 0$ and $\ell(\mathbf{p}^{\prime\prime}) > \ell(\mathbf{p}^\prime)$.
	
	If $\mathbf{x} \notin \overline{R}_n$ and at least one of the elements of
	$\mathbf{x}$ is equal to zero, the log-likelihood is concave (not strictly, Lemma \ref{lm:llcon} (b)).
	As such, the previous analysis can be repeated with the conclusion that
	$\ell(\mathbf{p}^{\prime\prime}) \geq \ell(\mathbf{p}^\prime)$, so that
	$\ell$ is maximized at a point with $p_{11} = 0$, although perhaps not
	uniquely.
	
	For $x_{11} = 0$, the log-likelihood is proportional to
	$x_{00}\log(p_{00}^k) + x_{10}\log((p_{00} + p_{10})^k - p_{00}^k) +
	x_{01}\log((p_{00} + p_{01})^k - p_{00}^k)$. Let $\mathbf{p}$ be any point
	such that $p_{11} > 0$. It is clear that taking the point $\mathbf{p}^\prime
	= (p_{00}, p_{10} + p_{11}, p_{01}, 0)$ the value of this function can be
	increased, provided $x_{10} > 0$. If instead $x_{10} = 0$ but $x_{01} > 0$,
	the same can achieved by taking $\mathbf{p}^\prime
	= (p_{00}, p_{10}, p_{01} + p_{11}, 0)$. Likewise, if both $x_{10} = 0$ and
	$x_{01} = 0$, so that $x_{00} = n$, the log-likelihood can be increased by
	taking $\mathbf{p}^\prime = (p_{00} + p_{11}, p_{10}, p_{01}, 0)$. Since one
	of these three cases must occur, it follows that any point maximizing the
	log-likelihood will necessarily have $p_{11} = 0$.

\subsection{Proof of Lemma \ref{lm:llcon2}}

The proof of both (a) and (b) is nearly identical to that of Lemma
\ref{lm:llcon}, the primary difference being that we now have
{\small
\[\frac{\partial \btheta}{\partial
	\mathbf{p}} = k\left( \begin{array}{cc}
-(1 - p_{10} - p_{01})^{k-1}  &-(1 - p_{10} -
p_{01})^{k-1} \\
(1 - p_{10} - p_{01})^{k-1} & (1 - p_{10} -
p_{01})^{k-1} - (
1 - p_{01})^{k-1}
\\
(1 - p_{10} - p_{01})^{k-1} - (1 - p_{10})^{k-1}
& (1 - p_{10} - p_{01})^{k-1}\\
\end{array}\right).\]}
Since this matrix has full column rank, it follows that (\ref{eq:zstar})
holds if and only if $\mathbf{z} \neq \mathbf{0}$ and the rest of the proof is identical.

\subsection{Proof of Result \ref{res:EM}}
	Let $\mathbf{Z}^{ij} = (z_{10}^{ij}, z_{01}^{ij})$ represent the (latent) disease status of the $j^{th}$ unit from the $i^{th}$ pool, so that $\mathbf{Z}^{ij} \sim MN(1, \mathbf{p}^*)$ and set $z_{00}^{ij} = n - z_{10}^{ij} - z_{01}^{ij}$.
	
	Using $\mathbf{Z}$ as the complete data in the EM framework, the complete data log-likelihood is given by
	\[\ell_C(\mathbf{p}^*, \mathbf{z}) \propto \sum \sum z_{00}^{ij} \log (1 - p_{10} - p_{01}) + \sum \sum z_{10}^{ij} \log (p_{10}) + \sum \sum z_{01}^{ij} \log (p_{01}).\] 
	
	We now proceed to calculate the E and M steps, respectively.
	\begin{description}
		\item[E-step:] \mbox{} \newline
		Let $\zeta_{r}^s(\mathbf{p}^{*(t)}) = \mathrm{E}(Z_r^{11} | \vartheta_s^{1} = 1,
                \mathbf{p}^{*(t)}),$ where $r \in \{(00),
		(10), (01)\}$ is the true status (under the reduced model), and $s \in S = \{(00), (10),
                (01), (11)\}$ is the observed status, and $\mathbf{p}^{*(t)}$ is the parameter estimate
		at the $t^{th}$ iteration. Then, we have the expectation of the
		complete log-likelihood
		\begin{align}
		\label{eq:Q}
		Q\left(\mathbf{p}^*; \mathbf{p}^{*(t)}\right) &\propto k \sum_{s \in S} \zeta_{00}^s(\mathbf{p}^{*(t)})X_s \log (1 - p_{10} - p_{01}) + k
		\sum_{s \in S} \zeta_{10}^s(\mathbf{p}^{*(t)})X_s \log (p_{10}) \nonumber \\ &\qquad + k \sum_{s \in S}
		\zeta_{01}^s(\mathbf{p}^{*(t)})X_s \log (p_{01}).
		\end{align}
		
		To calculate the values of $\zeta_{r}^s(\mathbf{p}^{*(t)})$, we have
		\begin{align*}
		\zeta_{00}^{00}(\mathbf{p}^{*(t)})  &= \mathrm{E}(Z_{00}^{11}|\vartheta_{00}^{1} = 1, \mathbf{p}^{*(t)}) = P(Z_{00}^{11} = 1|\vartheta_{00}^{1}=1)\\
		&= \frac{P(\vartheta_{00}^1 = 1| Z_{00}^{11} = 1)P(Z_{00}^{11} = 1)}{P(\vartheta_{00}^{1} = 1)} =\frac{(p_{00}^{(t)})^{k-1}p_{00}^{(t)}}{\theta_{00}^{(t)}} \\
		&= 1,
		\end{align*}
                where the fourth equality holds since, conditioning on the first observation being negative, the
                pool will be negative if and only if the remaining $k-1$ units are negative as well.

		Likewise,
		
		\[\zeta_{00}^{10}(\mathbf{p}^{*(t)}) = \frac{[(p_{00}^{(t)} + p_{10}^{(t)})^{k-1} -
			(p_{00}^{(t)})^{k-1}]p_{00}^{(t)}}{\theta_{10}^{(t)}} \]
		\[\zeta_{00}^{01}(\mathbf{p}^{*(t)}) = \frac{[(p_{00}^{(t)} + p_{01}^{(t)})^{k-1} -
			(p_{00}^{(t)})^{k-1}]p_{00}^{(t)}}{\theta_{01}^{(t)}} \]
		\[\zeta_{00}^{11}(\mathbf{p}^{*(t)}) = \frac{[1 - (p_{00}^{(t)} + p_{10}^{(t)})^{k-1} - (p_{00}^{(t)} + p_{01}^{(t)})^{k-1} 
			+ (p_{00}^{(t)})^{k-1}]p_{00}^{(t)}}{\theta_{11}^{(t)}} \]
		\[\zeta_{10}^{00}(\mathbf{p}^{*(t)}) = 0\]
		\[\zeta_{10}^{10}(\mathbf{p}^{*(t)}) = \frac{(p_{00}^{(t)} + p_{10}^{(t)})^{k-1}p_{10}^{(t)}}{\theta_{10}^{(t)}} \]
		\[\zeta_{10}^{01}(\mathbf{p}^{*(t)}) = 0 \]
		\[\zeta_{10}^{11}(\mathbf{p}^{*(t)}) =  \frac{[1 - (p_{00}^{(t)} + p_{10}^{(t)})^{k-1}]p_{10}^{(t)}}{\theta_{11}^{(t)}} \]
		\[\zeta_{01}^{00}(\mathbf{p}^{*(t)}) = 0\]
		\[\zeta_{01}^{10}(\mathbf{p}^{*(t)}) = 0 \] 
		\[\zeta_{01}^{01}(\mathbf{p}^{*(t)}) =  \frac{(p_{00}^{(t)} + p_{01}^{(t)})^{k-1}p_{01}^{(t)}}{\theta_{01}^{(t)}} \]
		\[\zeta_{01}^{11}(\mathbf{p}^{*(t)}) =  \frac{[1 - (p_{00}^{(t)} + p_{01}^{(t)})^{k-1}]p_{01}^{(t)}}{\theta_{11}^{(t)}}.\]

		\item[M-step:] \mbox{} \newline
		Since (\ref{eq:Q}) is a standard multinomial log-likelihood of size $kn$, the unique global maximizer
		is given by
		\begin{align*}
		p_{00}^{(t + 1)} &= \frac{\sum_{s \in S} \zeta_{00}^s(\mathbf{p}^{*(t)})X_s}{n} \\
		p_{10}^{(t + 1)} &= \frac{\sum_{s \in S} \zeta_{10}^s(\mathbf{p}^{*(t)})X_s}{n} \\
		p_{01}^{(t + 1)} &= \frac{\sum_{s \in S} \zeta_{01}^s(\mathbf{p}^{*(t)})X_s}{n}.
		\end{align*}

              \end{description}

\subsection{Lemmas for Theorems \ref{thm:distribution} and \ref{thm:bias}}
Before proving Theorems \ref{thm:distribution} and \ref{thm:bias}, we provide the following three lemmas. 

\begin{lemma}
  \label{lm:prn}
$P(\mathbf{x} \in \overline{R}_n^c) = O(n^{-2})$
\end{lemma}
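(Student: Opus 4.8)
The plan is to recognize $\overline{R}_n^c$ as a fixed‑threshold deviation event and control it with a fourth‑moment inequality. Write $g(\mathbf{x}) = \left(\frac{x_{00} + x_{10}}{n}\right)^{1/k} + \left(\frac{x_{00} + x_{01}}{n}\right)^{1/k} - \left(\frac{x_{00}}{n}\right)^{1/k}$, exactly as in the proof of Theorem \ref{md:mle1}, so that $\mathbf{x} \in \overline{R}_n^c$ if and only if $g(\mathbf{x}) > 1$. The strong‑law computation already carried out there gives $g(\mathbf{x}) \to (\theta_{00}+\theta_{10})^{1/k} + (\theta_{00}+\theta_{01})^{1/k} - \theta_{00}^{1/k} = p_{00} + p_{10} + p_{01} = 1 - p_{11}$, and since $0 \prec \mathbf{p}$ on $\boldsymbol{\Psi}_p$ we have $p_{11} > 0$, so the limit is strictly below $1$. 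Hence on $\overline{R}_n^c$ the quantity $g(\mathbf{x})$ deviates from its limit by at least the fixed positive constant $p_{11}$, and the task reduces to bounding $P\!\left(g(\mathbf{x}) - (1 - p_{11}) > p_{11}\right)$.

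The key elementary tool is the subadditivity bound $|a^{1/k} - b^{1/k}| \le |a - b|^{1/k}$ for all $a,b \in [0,1]$ and every integer $k \ge 1$, which follows from $(u^k + v^k)^{1/k} \le u + v$ for $u,v \ge 0$ upon taking $a \ge b$, $u = b^{1/k}$, $v = (a-b)^{1/k}$. Applying it to each of the three terms of $g(\mathbf{x}) - g(\btheta)$ shows that $\mathbf{x} \in \overline{R}_n^c$ forces at least one of $\left|\frac{x_{00}+x_{10}}{n} - (\theta_{00}+\theta_{10})\right|$, $\left|\frac{x_{00}+x_{01}}{n} - (\theta_{00}+\theta_{01})\right|$, $\left|\frac{x_{00}}{n} - \theta_{00}\right|$ to exceed $(p_{11}/3)^k$. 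Thus $P(\mathbf{x} \in \overline{R}_n^c)$ is at most a sum of three tail probabilities of this form.

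Each of the three counts is (a scaled) binomial: $x_{00} \sim \mathrm{Bin}(n,\theta_{00})$, while $x_{00}+x_{10}$ counts the pools testing negative for the second trait and so is $\mathrm{Bin}(n,\theta_{00}+\theta_{10})$, and similarly $x_{00}+x_{01} \sim \mathrm{Bin}(n,\theta_{00}+\theta_{01})$. The fourth central moment of a binomial is $O(n^2)$, so Markov's inequality applied to the fourth power gives, for the fixed threshold $\varepsilon = (p_{11}/3)^k$, $P\!\left(\left|\tfrac{S}{n} - q\right| > \varepsilon\right) \le \varepsilon^{-4}\,\mathrm{E}\!\left(\tfrac{S}{n} - q\right)^4 = O(n^{-2})$ (alternatively, Hoeffding's inequality gives exponential decay, a fortiori $O(n^{-2})$). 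Summing the three bounds yields $P(\mathbf{x} \in \overline{R}_n^c) = O(n^{-2})$, with an implied constant depending only on $\mathbf{p}$ and $k$.

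The one delicate point — and the reason I route through subadditivity rather than a delta‑method/mean‑value expansion of $g$ about $\btheta$ — is that $t \mapsto t^{1/k}$ fails to be Lipschitz at $t = 0$ when $k > 1$, whereas $x_{00}$ (and hence $x_{00}/n$) can equal $0$ with positive probability; a linearization is therefore not valid uniformly over the sample space. The inequality $|a^{1/k}-b^{1/k}| \le |a-b|^{1/k}$ supplies a modulus of continuity valid on all of $[0,1]$, at the harmless cost of replacing a linear deviation threshold by the still‑constant threshold $(p_{11}/3)^k$. (One could instead split on the event $\{|x_{00}/n - \theta_{00}| \le \theta_{00}/2\}$, where $g$ is genuinely Lipschitz, and treat the complementary event separately by the same fourth‑moment argument, but the approach above is shorter.)
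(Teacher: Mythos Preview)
Your proof is correct and follows the same overall architecture as the paper's: reduce $\{g(\mathbf{x})>1\}$ to a union of three binomial large-deviation events and bound each by a fourth-moment Markov inequality. The only difference is in how the fixed threshold is obtained. The paper works on the input side: it picks $\epsilon>0$ by continuity so that $(\theta_{00}+\theta_{10}+\epsilon)^{1/k}+(\theta_{00}+\theta_{01}+\epsilon)^{1/k}-(\theta_{00}-\epsilon)^{1/k}\le 1$, and then observes that $g(\mathbf{x})>1$ forces one of the one-sided deviations $n\theta_{00}-x_{00}$, $x_{00}+x_{10}-n(\theta_{00}+\theta_{10})$, $x_{00}+x_{01}-n(\theta_{00}+\theta_{01})$ to exceed $n\epsilon$. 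You instead bound $g(\mathbf{x})-g(\boldsymbol{\theta})$ directly via the subadditivity inequality $|a^{1/k}-b^{1/k}|\le|a-b|^{1/k}$, yielding the explicit threshold $(p_{11}/3)^k$. Because the paper never linearizes $t\mapsto t^{1/k}$, the Lipschitz failure at $t=0$ that you carefully work around simply does not arise in their argument; conversely, your route produces an explicit constant rather than an existential one. Either way the conclusion and the rate are identical.
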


\begin{proof}
We have $\mathbf{x} \in \overline{R}_n^c$ only if $\left(\frac{x_{00} + x_{10}}{n}\right)^{1 / k} +
\left(\frac{x_{00} + x_{01}}{n}\right)^{1 / k} - \left(\frac{x_{00}}{n}\right)^{1 / k} > 1$. Choose
$\epsilon$ such that $(\theta_{00} + \theta_{10} + \epsilon)^{1 / k} + (\theta_{00} + \theta_{01} +
\epsilon)^{1/k} - (\theta_{00} - \epsilon)^{1 / k} \leq 1$ so that $\mathbf{x} \in
\overline{R}_n^c$ implies the event $\{n\theta_{00} - x_{00} > n \epsilon\}\cup \{x_{00} + x_{10} - n\theta_{00} - n\theta_{10} >
n\epsilon\}\cup\{x_{00} + x_{01} - n\theta_{00} - n\theta_{01} >
n\epsilon\}$ occurs. Then, using the Markov inequality,
\footnotesize
\begin{align*}
P(\mathbf{x} \in \overline{R}_n^c) &\leq P(\{n\theta_{00} - x_{00} > n \epsilon\}\cup \{x_{00} + x_{10} - n\theta_{00} - n\theta_{10} >
n\epsilon\}\\ &\qquad \qquad \cup\{x_{00} + x_{01} - n\theta_{00} - n\theta_{01} >
n\epsilon\}) \\ &\leq P(|x_{00} -n\theta_{00}| > n\epsilon) + P(|x_{00} + x_{10} - n\theta_{00} - n\theta_{10}| >
									n\epsilon)\\ &\qquad \qquad + P(|x_{00} + x_{00} - n\theta_{00} - n\theta_{01}| > n\epsilon) \\
									&< \frac{\mathrm{E}(x_{00} -n\theta_{00})^4 + \mathrm{E}(x_{00} + x_{10} - n\theta_{00} - n\theta_{10})^4 +
											\mathrm{E}(x_{00} + x_{00} - n\theta_{00} - n\theta_{01})^4}{n^4\epsilon^4}
									\\ &= \frac{C}{n^2},
\end{align*}
                for some constant $C$ which does not depend on $n$.
\end{proof}

\begin{lemma}
  \label{lm:g1}
Let $\mathbf{f} = (f_1, f_2, f_3) = \left(\left(\frac{x_{00} + x_{10}}{n}\right)^{1 / k}, \left(\frac{x_{00} + x_{01}}{n}\right)^{1 / k}, \left(\frac{x_{00}}{n}\right)^{1 / k}\right)$, then the elements of $\mathbf{f}$ can be expressed as follows:
\begin{itemize}
	\item[(a)] $f_1(x_{00} + x_{10}) = p_{00} + p_{10} + \frac{1}{nk(p_{00} + p_{10})^{k - 1}}(x_{00} + x_{10} - n(p_{00} + p_{10})^k) +
							\frac{1 - k}{2n^2k^2(p_{00} + p_{10})^{2k - 1}}(x_{00} + x_{10} - n(p_{00} + p_{10})^k)^2 + O_p(n^{-2});$
	\item[(b)] $f_2(x_{00} + x_{01}) = p_{00} + p_{01} + \frac{1}{nk(p_{00} + p_{01})^{k - 1}}(x_{00} + x_{01} - n(p_{00} + p_{01})^k) +
							\frac{1 - k}{2n^2k^2(p_{00} + p_{01})^{2k - 1}}(x_{00} + x_{01} - n(p_{00} + p_{01})^k)^2 + O_p(n^{-2});$
	\item[(c)] $f_3(x_{00}) = p_{00} + \frac{1}{nkp_{00}^{k - 1}}(x_{00} - np_{00}^k) +
							\frac{1 - k}{2n^2k^2 p_{00}^{2k - 1}}(x_{00} - np_{00}^k)^2 + O_p(n^{-2}).$
\end{itemize}
\end{lemma}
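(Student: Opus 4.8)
The plan is to obtain (a)--(c) as three instances of a single one-variable Taylor expansion of the power function $g(t) = t^{1/k}$. Write $Y$ for the count appearing in the part under consideration: $Y = x_{00} + x_{10}$ for (a), $Y = x_{00} + x_{01}$ for (b), $Y = x_{00}$ for (c). In each case $Y$ is a sum of $n$ i.i.d.\ Bernoulli indicators, so $Y \sim \mathrm{Bin}(n,\lambda)$ where $\lambda$ equals $\theta_{00} + \theta_{10} = (p_{00} + p_{10})^k$, $\theta_{00} + \theta_{01} = (p_{00} + p_{01})^k$, and $\theta_{00} = p_{00}^k$, respectively. Hence $\bar Y := Y/n$ has mean $\lambda$, $\bar Y - \lambda = n^{-1}(Y - n\lambda) = O_p(n^{-1/2})$ by the central limit theorem, and its central moments are fully known: the second is $\lambda(1-\lambda)/n$, the third is $O(n^{-2})$, the fourth is $O(n^{-2})$, and so on.

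Next I would apply Taylor's theorem to $g$ about $t = \lambda$, evaluated at $t = \bar Y$,
\[
g(\bar Y) = g(\lambda) + g'(\lambda)(\bar Y - \lambda) + \tfrac{1}{2} g''(\lambda)(\bar Y - \lambda)^2 + R_n,
\qquad R_n = \tfrac{1}{6} g'''(\xi)(\bar Y - \lambda)^3,
\]
for some $\xi$ between $\bar Y$ and $\lambda$. Since $g'(\lambda) = \tfrac{1}{k}\lambda^{1/k-1}$ and $g''(\lambda) = \tfrac{1-k}{k^2}\lambda^{1/k-2}$, substituting the appropriate $\lambda$ and writing $(\bar Y - \lambda)^j = n^{-j}(Y - n\lambda)^j$ reproduces the displayed terms in (a)--(c) verbatim; for example in (a) one gets $g(\lambda) = p_{00} + p_{10}$, first-order term $\tfrac{1}{nk(p_{00}+p_{10})^{k-1}}(x_{00} + x_{10} - n(p_{00}+p_{10})^k)$, and second-order term $\tfrac{1-k}{2n^2k^2(p_{00}+p_{10})^{2k-1}}(x_{00}+x_{10} - n(p_{00}+p_{10})^k)^2$. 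So the only real content is that the remainder $R_n$ is of the asserted order.

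The main obstacle is precisely the control of $R_n$: since $g'''(t) = \tfrac{(1-k)(1-2k)}{k^3} t^{1/k - 3}$ is unbounded as $t \downarrow 0$ and $\bar Y$ can be close to $0$ (e.g.\ $x_{00}$ may be $0$), the Lagrange remainder cannot be bounded unconditionally. I would dispose of this exactly as in the proof of Lemma \ref{lm:prn}: fix $\delta \in (0,\lambda)$ and restrict to the event $A_n = \{|\bar Y - \lambda| \le \delta\}$, whose complement has probability $O(n^{-2})$ by the fourth-moment Markov bound. On $A_n$ the intermediate point $\xi$ is bounded away from $0$, so $g'''(\xi)$ is bounded by a constant and $R_n = O_p((\bar Y - \lambda)^3) = O_p(n^{-3/2})$, which is of smaller order than the retained $n^{-1}$ term; on $A_n^c$ both $g(\bar Y)$ and the explicit polynomial terms are bounded, so that part of the remainder is supported on a set of probability $O(n^{-2})$. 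Since moreover $\mathrm{E}[(Y - n\lambda)^3] = n\lambda(1-\lambda)(1-2\lambda) = O(n)$, the (signed) expectation of $R_n$ is $O(n^{-2})$, which is the form in which the lemma is invoked to produce the asymptotic covariance matrix and the $O(n^{-2})$ bias expansions of Theorems \ref{thm:distribution} and \ref{thm:bias}. Parts (b) and (c) are then immediate by repeating the argument with $Y = x_{00} + x_{01}$ and $Y = x_{00}$.
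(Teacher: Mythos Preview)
Your proof is correct and follows the same approach as the paper: recognize that each argument is a marginal $\mathrm{Bin}(n,\lambda)$ with $\lambda$ equal to the appropriate power of a sum of the $p_{ij}$'s, then take a second-order Taylor expansion of $t\mapsto t^{1/k}$ about $\lambda$. The paper's argument is in fact terser than yours, computing the first two derivatives and substituting without any explicit control of the Lagrange remainder; your truncation to $A_n=\{|\bar Y-\lambda|\le\delta\}$ and the observation that $\mathrm{E}[(Y-n\lambda)^3]=O(n)$ supply exactly the justification the paper omits, and your remark that the remainder is $O_p(n^{-3/2})$ pointwise but $O(n^{-2})$ in expectation (which is how the lemma is actually used downstream) is a useful sharpening of what the stated $O_p(n^{-2})$ is meant to convey.
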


\begin{proof}
For each element of $f$, the component random variables can be expressed as a single marginal binomial random variable (e.g., $x_{00} + x_{10} \sim Bin(n, (p_{00} + p_{10})^k)$). The result then follows by taking the second order Taylor expansion of each element about the mean of the constituent random variable.

We illustrate this for $f_1$. Let $\tilde{x} = x_{00} + x_{10}$ and set $\tilde{x}_0 = \mathrm{E}(\tilde{x}) = n(p_{00} + p_{10})^k.$ Then, since $\frac{df_1}{d\tilde{x}} = \frac{\xi}{n}\left(\frac{\tilde{x}}{n}\right)^{\xi -1 }$ and 
$\frac{d^2f_1}{d\tilde{x}^2} = \frac{\xi(\xi - 1)}{n^2}\left(\frac{\tilde{x}}{n}\right)^{\xi -2}$, the Taylor expansion of $f_1$ about $\tilde{x}_0$ yields, letting $\xi = 1 / k$,
\begin{align*}
	f_1 &= \left(\frac{\tilde{x}_0}{n}\right)^{\xi} + 
	\frac{\xi}{n}\left(\frac{\tilde{x}_0}{n}\right)^{\xi-1}(\tilde{x} - \tilde{x}_0) + 
	\frac{\xi(\xi - 1)}{2n^2}\left(\frac{\tilde{x}_0}{n}\right)^{\xi-2}(\tilde{x} - \tilde{x}_0)^2 + O_p(n^{-2})\\
	&=p_{00} + p_{10} + \frac{1}{nk(p_{00} + p_{10})^{k - 1}}(x_{00} + x_{10} - n(p_{00} + p_{10})^k) \\
	&\qquad +	\frac{1 - k}{2n^2k^2(p_{00} + p_{10})^{2k - 1}}(x_{00} + x_{10} - n(p_{00} + p_{10})^k)^2 + O_p(n^{-2}).
\end{align*}
\end{proof}

\begin{lemma}
  \label{lm:g2}
  Let
  $\mathbf{g} = (g_1, g_2, g_3) = \left(\left(\frac{x_{00} + x_{10} + \eta}{n + \eta}\right)^{1 /
      k}, \left(\frac{x_{00} + x_{01} + \eta}{n + \eta}\right)^{1 / k}, \left(\frac{x_{00} + \eta}{n
        + \eta}\right)^{1 / k}\right)$, then the elements of $\mathbf{g}$ can be expressed as
  follows:
\begin{itemize}
	\item[(a)] $g_1(x_{00} + x_{10}) = p_{00} + p_{10} + \frac{1 - (p_{00} + p_{10})^k}{nk(p_{00} + p_{10})^{k - 1}}\eta + \frac{1}{nk(p_{00} + p_{10})^{k - 1}}(x_{00} + x_{10} - n(p_{00} + p_{10})^k) +
							\frac{1 - k}{2n^2k^2(p_{00} + p_{10})^{2k - 1}}(x_{00} + x_{10} - n(p_{00} + p_{10})^k)^2 + O_p(n^{-2});$
	\item[(b)] $g_2(x_{00} + x_{01}) = p_{00} + p_{01} + \frac{1 - (p_{00} + p_{01})^k}{nk(p_{00} + p_{01})^{k - 1}}\eta + \frac{1}{nk(p_{00} + p_{01})^{k - 1}}(x_{00} + x_{01} - n(p_{00} + p_{01})^k) +
							\frac{1 - k}{2n^2k^2(p_{00} + p_{01})^{2k - 1}}(x_{00} + x_{01} - n(p_{00} + p_{01})^k)^2 + O_p(n^{-2});$
	\item[(c)] $g_3(x_{00}) = p_{00} + \frac{1 - p_{00}^k}{nkp_{00}^{k - 1}}\eta+ \frac{1}{nkp_{00}^{k - 1}}(x_{00} - np_{00}^k) +
							\frac{1 - k}{2n^2k^2 p_{00}^{2k - 1}}(x_{00} - np_{00}^k)^2 + O_p(n^{-2}).$
\end{itemize}
\end{lemma}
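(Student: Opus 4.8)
The plan is to reuse the argument from the proof of Lemma~\ref{lm:g1}, treating the Burrows shift by the fixed constant $\eta=\frac{k-1}{2k}$ as a first‑order perturbation. As there, each coordinate of $\mathbf{g}$ depends on $\mathbf{x}$ only through a single marginal binomial count: $x_{00}+x_{10}\sim Bin(n,(p_{00}+p_{10})^k)$ for $g_1$, $x_{00}+x_{01}\sim Bin(n,(p_{00}+p_{01})^k)$ for $g_2$, and $x_{00}\sim Bin(n,p_{00}^k)$ for $g_3$. I will carry out the argument for $g_1$; the other two cases are identical after relabelling, with $n-x_{00}$ playing the role of $n-\tilde x$ in the case of $g_3$. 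Write $\tilde x=x_{00}+x_{10}$, $\mu=(p_{00}+p_{10})^k$, $\tilde x_0=\mathrm{E}(\tilde x)=n\mu$, and $\xi=1/k$, so that $g_1=\bigl(\tfrac{\tilde x+\eta}{n+\eta}\bigr)^{\xi}$, while the corresponding quantity in Lemma~\ref{lm:g1} is $f_1=\bigl(\tfrac{\tilde x}{n}\bigr)^{\xi}$.

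The central step is to isolate the $\eta$‑dependence by writing $g_1=f_1+(g_1-f_1)$ and expanding the difference. With $h(t)=\bigl(\tfrac{\tilde x+t}{n+t}\bigr)^{\xi}$, so that $g_1=h(\eta)$ and $f_1=h(0)$, Taylor's theorem in $t$ gives
\begin{equation*}
g_1-f_1 = h'(0)\,\eta + \tfrac12 h''(\tau)\,\eta^2,\qquad 0\le\tau\le\eta,
\end{equation*}
where $h'(0)=\frac{\xi}{n}\bigl(\tfrac{\tilde x}{n}\bigr)^{\xi-1}\frac{n-\tilde x}{n}=O_p(n^{-1})$ and, since $\tilde x$ and $n-\tilde x$ are both of order $n$, $h''(\tau)=O_p(n^{-2})$, so the quadratic term is $O_p(n^{-2})$. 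Using $\tilde x/n=\mu+O_p(n^{-1/2})$ and $(n-\tilde x)/n=1-\mu+O_p(n^{-1/2})$ to replace those factors in $h'(0)$ by their limits changes $h'(0)$ only by an $O_p(n^{-1/2})$ multiple of an $O(n^{-1})$ deterministic factor, i.e. by a centred term of the same (moment) order as the higher‑order terms already discarded in Lemma~\ref{lm:g1}; folding it into the remainder yields
\begin{equation*}
g_1-f_1 = \frac{1-(p_{00}+p_{10})^k}{nk(p_{00}+p_{10})^{k-1}}\,\eta + O_p(n^{-2}).
\end{equation*}
Substituting the expansion of $f_1$ from Lemma~\ref{lm:g1}(a) then reproduces statement~(a) exactly; $g_2$ and $g_3$ are handled verbatim, using the relevant binomial count, its mean, and (for $g_3$) $n-x_{00}$ in place of $n-\tilde x$.

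\textbf{The main obstacle} is the order bookkeeping around the $\eta$‑shift. One must confirm that inserting $\eta$ into both the numerator and the denominator perturbs the Lemma~\ref{lm:g1} expansion only through the single new deterministic term of order $n^{-1}$ displayed above, so that the linear and quadratic coefficients in $\tilde x-\tilde x_0$ carry over unchanged to the order retained. Equivalently, if one Taylor‑expands $g_1$ directly about $\tilde x_0$, the point is that the $\eta$‑corrections to the first and second derivatives at $\tilde x_0$, together with the $\eta^2$ term, contribute only at the $O_p(n^{-2})$ level in the sense used throughout Appendix~\ref{sec:ap} --- being products of $O(n^{-1})$ (or smaller) deterministic factors with centred random quantities, hence negligible for both the limiting distribution and the first‑order bias. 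Once this is settled, no computation beyond that in Lemma~\ref{lm:g1} is required: the present lemma is Lemma~\ref{lm:g1} plus a first‑order perturbation in $\eta$.
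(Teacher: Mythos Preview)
Your proposal is correct and follows essentially the same two-variable Taylor strategy as the paper, just with the order of the expansions reversed: the paper first expands $g_1$ in $\tilde x$ about $\tilde x_0$ and then expands each coefficient in $\eta$ about $0$, whereas you first expand in $\eta$ (writing $g_1=f_1+h'(0)\eta+\tfrac12 h''(\tau)\eta^2$) and then invoke the $\tilde x$-expansion of $f_1$ from Lemma~\ref{lm:g1}. Your ordering is marginally cleaner since it reuses Lemma~\ref{lm:g1} verbatim rather than rederiving it, but the computations and the level of rigor in the order bookkeeping (in particular, absorbing the centred $O_p(n^{-3/2})$ cross-term into the $O_p(n^{-2})$ remainder for the purposes of the downstream bias and variance calculations) are identical in both arguments.
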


\begin{proof}
The proof proceeds identically as for the previous lemma by first finding the second order taylor expansion of the elements of $\mathbf{g}$ about the mean of the respective component random variables. The result then follows by finding the Taylor expansion of each resultant term about $\eta = 0$.

We illustrate this for $g_1$, for which, letting $\tilde{x}$, $\tilde{x}_0$, and $\xi$ be as defined in the proof of the previous lemma, has the Taylor expansion about $\tilde{x}_0$
\begin{equation}
	\label{tg1}
	g_1 = \left(\frac{\tilde{x}_0 + \eta}{n + \eta}\right)^{\xi} + \frac{\xi(\tilde{x}_0 + \eta)^{\xi - 1}}{(n + \eta)^{\xi}}(\tilde{x} - \tilde{x}_0) + \frac{\xi(\xi - 1)(\tilde{x}_0 + \eta)^{\xi - 2}}{2(n + \eta)^{\xi}}(\tilde{x} - \tilde{x}_0)^2 + O_p(n^{-2}).
\end{equation}
For an integer $r$, we have \[\frac{d}{d\eta}\left(\frac{(\tilde{x}_0 + \eta)^{\xi - r}}{(n + \eta)^\xi}\right) =
\frac{\xi - r}{(n + \eta)^{r + 1}}\left(\frac{\tilde{x}_0 + \eta}{n + \eta}\right)^{\xi - r - 1} - \frac{\xi}{(n + \eta)^{r + 1}} \left(\frac{\tilde{x}_0 + \eta}{n + \eta}\right)^{\xi - r},\] so that for $r > 1$ this derivative is $O(n^{-3})$. Likewise, for all $r \geq 1$ the second derivative of the same term will be $O(n^{-3})$. As such, taking the Taylor expansion of (\ref{tg1}) about $\eta = 0$ yields
{\small
\begin{align*}
	g_1 &= \left(\frac{\tilde{x}_0}{n}\right)^{\xi} + \frac{\xi}{n}\left(\frac{\tilde{x}_0}{n}\right)^{\xi - 1}\eta - 
	\frac{\xi}{n}\left(\frac{\tilde{x}_0}{n}\right)^{\xi} \eta + \frac{\xi}{x}\left(\frac{\tilde{x}_0}{n}\right)^{\xi - 1} (\tilde{x} - \tilde{x}_0)\\ 
	&\qquad + \frac{\xi(\xi - 1)}{2n^2}\left(\frac{\tilde{x}_0}{n}\right)^{\xi - 2}(\tilde{x} - \tilde{x}_0)^2 + O_p(n^{-2})\\
	&= p_{00} + p_{10} + \frac{1 - (p_{00} + p_{10})^k}{nk(p_{00} + p_{10})^{k - 1}}\eta + \frac{1}{nk(p_{00} + p_{10})^{k - 1}}(x_{00} + x_{10} - n(p_{00} + p_{10})^k)\\ &\qquad +
							\frac{1 - k}{2n^2k^2(p_{00} + p_{10})^{2k - 1}}(x_{00} + x_{10} - n(p_{00} + p_{10})^k)^2 + O_p(n^{-2})
\end{align*}}
\end{proof}

\subsection{Proof of Theorem \ref{thm:distribution}}

(a) By the definition of the $RMM$ estimator, we have $\mathbf{p}^{RMM} - \mathbf{p}^{MLE} = h(\mathbf{x})I(\mathbf{x} \in \overline{R}_n^c)$,
where $h$ is some bounded function and $I$ is the indicator function. Then, from (b) in Theorem \ref{md:mle1} the right hand side of this expression converges
to zero almost surely, and the result follows.

(b) Each estimator can be expressed as $\hat{\mathbf{p}} = h_1(\mathbf{x}) +
h_2(\mathbf{x})I(\mathbf{x} \in \overline{R}_n^c)$ for some bounded  functions $h_1, h_2$. By standard
multinomial theory, $\mathbf{x}$ is asymptotically normal with rate $O(n^{-1})$ so that the first term, $h_1$, is as well
by the Delta method.  By Lemma \ref{lm:prn} and the boundedness of $h_2$, the second term is of
order $O(n^{-2})$ so that the overall convergence (up to $O(n^{-1})$) is determined only by the first
term. The exact values of the asymptotic covariance matrix can then be calculated directly using the
Taylor expansions given in Lemmas \ref{lm:g1} and \ref{lm:g2}. Some algebra shows that each
estimator yields an identical covariance matrix, so that the large sample
distributions, up to $O(n^{-1})$, are identical.

We illustrate the calculations for the covariance matrix (as given in Appendix \ref{sec:apSigma}) by finding $\boldsymbol{\Sigma}_{11}$, where $\frac{1}{nk^2}\boldsymbol{\Sigma}_{11}$ is the asymptotic variance of $p_{10}$  for the MLE. Note that, for $p_{11}^{MLE}$, the function $h_1$ defined above is given by $f_1 - f_3$, where $f_1, f_3$ are as in Lemma \ref{lm:g1}. As such, ignoring terms of $O(n^{-2})$, the asymptotic variance of $p_{10}^{MLE}$ is given by, 
defining $\lambda_{10} = p_{00} + p_{10}$, 
{\small
\begin{align*}
	\boldsymbol{\Sigma}_{11} = nk^2\mathrm{E}\left(f_1 - f_3 - p_{10}\right)^2 &= \frac{1}{n}\mathrm{E}\left(\frac{(x_{00} + x_{10} - n\lambda_{10}^k)}{\lambda_{10}^{k-1}} - \frac{(x_{00}- n\lambda_{00}^k)}{p_{00}^{k-1}}\right)^2 \\
	&= \frac{1}{n}\left(\frac{\mathrm{V}(x_{00} + x_{10})}{\lambda_{10}^{2k - 2}} + \frac{\mathrm{V}(x_{00})}{p_{00}^{2k - 2}} - 2 \frac{\mathrm{Cov}(x_{00} + x_{10}, x_{00})}{\lambda_{10}^{k-1}p_{00}^{k-1}}\right)\\
	&= \frac{1}{n}\left(\frac{n\lambda_{10}^k(1 - \lambda_{10}^k)}{\lambda_{10}^{2k - 2}} + \frac{np_{00}^k(1 - p_{00}^k)}{p_{00}^{2k - 2}}\right.\\
	&\qquad \qquad \left.\ - 2 \frac{np_{00}^k(1 - p_{00}^k) - np_{00}^k(\lambda_{10}^k - p_{00}^k)}{\lambda_{10}^{k-1}p_{00}^{k-1}}\right)\\
	&= \frac{1}{\lambda_{10}^k}\left[ \lambda_{10}^2 - 2 \lambda_{10}p_{00}\right] + \frac{p_{00}^2}{p_{00}^k} - (\lambda_{10} - p_{00})^2\\
	&= p_{10}^2\left(\frac{1}{\lambda_{10}^k} - 1\right) + p_{00}^2\left(\frac{1}{p_{00}^k} - \frac{1}{\lambda_{10}^k}\right).
\end{align*}}

Similar calculations yield the other values of $\boldsymbol{\Sigma}$, noting that the $h_1$ functions for $p_{01}^{MLE}$ and $p_{11}^{MLE}$ are given by $f_2 - f_3$ and $1 - f_1 - f_2 + f_3$, respectively.

To see that the Burrows estimator yields the same asymptotic covariance matrix, note that the function $h_1$ defined above for $p_{10}^{B}$ is $g_1 - g_3$, where $g_1$ and $g_3$ are as defined in Lemma \ref{lm:g2}. Now, for any value of $\eta$, we have, ignoring terms of $O(n^{-2})$, $g_1 - g_3 = f_1 - f_3 + C,$ where $C$ is constant with respect to $\mathbf{x}$ and of order $O(n^{-1})$. As such, the asymptotic variance of $p_{10}^{B}$ is given by $\mathrm{E}(f_1 - f_3 + C - p_{10})^2 = \frac{1}{nk^2}\boldsymbol{\Sigma}_{11} + 2C\mathrm{E}(f_1 - f_3 - p_{10}) + C^2 = \frac{1}{nk^2}\boldsymbol{\Sigma}_{11} + O(n^{-2})$. Similar reasoning holds for the other elements of the asymptotic covariance matrix for the Burrows estimator.

\subsection{Proof of Theorem \ref{thm:bias}}
This proof is nearly identical to (b) in the previous theorem. The first order expectations can be
found using the Taylor expansions in Lemmas \ref{lm:g1} and \ref{lm:g2}, and combined as in the proof the the previous theorem.

\section{Additional Tables}
\label{sec:tables} 
\begin{table}[H]
	\caption{Relative bias, defined for the $i^{th}$ element to be $100 \times \frac{\mathrm{E}(\hat{p}_i - p_i)}{p_i}$, for $k = 2$}
	\label{tab:bias1}
	\centering
	\resizebox{\columnwidth}{!}{%
		\begin{tabular}{llccccccccc}
			\toprule
			\multicolumn{2}{r}{$(p_{10}\ p_{01}\ p_{11}) = $} & (0.001 & 0.001 & 0.0001) & (0.045 & 0.045 & 0.005) & (0.095 & 0.045 & 0.005) \\ 
			\cmidrule(lr){3-5} \cmidrule(lr){6-8} \cmidrule(lr){9-11}
			$n = 10$ & $\hat{\mathbf{p}}^{MLE}$  & 2.536 & 2.536 & 3.658 & -1.089 & -1.089 & 38.991 & -0.726 & -5.012 & 76.072 \\   
			& $\hat{\mathbf{p}}^{RMM}$           & 2.533 & 2.533 & 3.658 & -1.312 & -1.312 & 38.991 & -1.040 & -5.432 & 76.072 \\   
			& $\hat{\mathbf{p}}^{B}$             & -0.036 & -0.036 & 1.063 & -3.889 & -3.889 & 35.701 & -3.719 & -7.947 & 72.229 \\ 
			$n = 25$ & $\hat{\mathbf{p}}^{MLE}$ & 0.934 & 0.934 & 2.026 & -2.038 & -2.038 & 29.988 & -1.598 & -4.819 & 55.740 \\   
			& $\hat{\mathbf{p}}^{RMM}$          & 0.921 & 0.921 & 2.026 & -2.166 & -2.166 & 29.988 & -1.795 & -5.027 & 55.740 \\   
			& $\hat{\mathbf{p}}^{B}$            & -0.089 & -0.089 & 1.005 & -3.192 & -3.192 & 28.838 & -2.862 & -6.056 & 54.610 \\ 
			$n = 50$ & $\hat{\mathbf{p}}^{MLE}$ & 0.420 & 0.420 & 1.500 & -1.656 & -1.656 & 20.740 & -1.167 & -3.207 & 35.042 \\   
			& $\hat{\mathbf{p}}^{RMM}$          & 0.406 & 0.406 & 1.500 & -1.728 & -1.728 & 20.740 & -1.282 & -3.317 & 35.042 \\   
			& $\hat{\mathbf{p}}^{B}$            & -0.097 & -0.097 & 0.993 & -2.248 & -2.248 & 20.259 & -1.822 & -3.851 & 34.685 \\ 
			$n = 100$ & $\hat{\mathbf{p}}^{MLE}$ & 0.168 & 0.168 & 1.234 & -0.819 & -0.819 & 10.236 & -0.521 & -1.469 & 16.252 \\   
			& $\hat{\mathbf{p}}^{RMM}$          & 0.153 & 0.153 & 1.234 & -0.851 & -0.851 & 10.236 & -0.576 & -1.519 & 16.252 \\   
			& $\hat{\mathbf{p}}^{B}$            & -0.097 & -0.097 & 0.981 & -1.116 & -1.116 & 10.049 & -0.849 & -1.794 & 16.153 \\ 
			\cmidrule(lr){3-11}
			\multicolumn{2}{r}{$(p_{10}\ p_{01}\ p_{11}) = $} & (0.1 & 0.1 & 0.1) & (0.15 & 0.1 & 0.2) & (0.25 & 0.05 & 0.15) \\ 
			\cmidrule(lr){3-5} \cmidrule(lr){6-8} \cmidrule(lr){9-11}
			$n = 10$ & $\hat{\mathbf{p}}^{MLE}$  & 2.808 & 2.808 & 3.457 & 7.017 & 7.664 & 1.417 & 5.662 & 6.391 & 2.047 \\
			& $\hat{\mathbf{p}}^{RMM}$           & 2.670 & 2.670 & 3.457 & 6.911 & 7.536 & 1.417 & 5.527 & 6.112 & 2.047 \\
			& $\hat{\mathbf{p}}^{B}$             & -1.105 & -1.105 & 1.309 & 0.327 & 0.299 & 0.068 & 0.156 & -0.573 & 0.326 \\
			$n = 25$ & $\hat{\mathbf{p}}^{MLE}$ & 1.415 & 1.415 & 0.911 & 2.009 & 2.147 & 0.824 & 1.789 & 2.281 & 0.788 \\
			& $\hat{\mathbf{p}}^{RMM}$          & 1.411 & 1.411 & 0.911 & 2.008 & 2.146 & 0.824 & 1.787 & 2.278 & 0.788 \\
			& $\hat{\mathbf{p}}^{B}$            & -0.031 & -0.031 & 0.061 & 0.068 & 0.079 & -0.010 & 0.047 & 0.064 & -0.001 \\
			$n = 50$ & $\hat{\mathbf{p}}^{MLE}$ & 0.715 & 0.715 & 0.428 & 0.944 & 1.002 & 0.428 & 0.853 & 1.070 & 0.405 \\
			& $\hat{\mathbf{p}}^{RMM}$          & 0.715 & 0.715 & 0.428 & 0.944 & 1.002 & 0.428 & 0.853 & 1.070 & 0.405 \\
			& $\hat{\mathbf{p}}^{B}$            & 0.006 & 0.006 & 0.001 & 0.015 & 0.017 & -0.001 & 0.012 & 0.020 & -0.002 \\
			$n = 100$ & $\hat{\mathbf{p}}^{MLE}$ & 0.353 & 0.353 & 0.214 & 0.460 & 0.487 & 0.216 & 0.417 & 0.518 & 0.205 \\
			& $\hat{\mathbf{p}}^{RMM}$          & 0.353 & 0.353 & 0.214 & 0.460 & 0.487 & 0.216 & 0.417 & 0.518 & 0.205 \\
			& $\hat{\mathbf{p}}^{B}$            & 0.002 & 0.002 & 0.000 & 0.004 & 0.004 & 0.000 & 0.003 & 0.005 & 0.000 \\
			\bottomrule
			
	\end{tabular}}
\end{table}

\begin{table}[H]
	\caption{Relative bias, defined for the $i^{th}$ element to be $100 \times \frac{\mathrm{E}(\hat{p}_i - p_i)}{p_i}$, for $k = 10$}
	\label{tab:bias2}
	\centering
	\resizebox{\columnwidth}{!}{%
		\begin{tabular}{llccccccccc}
			\toprule
			\multicolumn{2}{r}{$(p_{10}\ p_{01}\ p_{11}) = $} & (0.001 & 0.001 & 0.0001) & (0.045 & 0.045 & 0.005) & (0.095 & 0.045 & 0.005) \\ 
			\cmidrule(lr){3-5} \cmidrule(lr){6-8} \cmidrule(lr){9-11}
			$n = 10$ & $\hat{\mathbf{p}}^{MLE}$  & 3.922 & 3.922 & 14.062 & -4.274 & -4.274 & 108.752 & 9.787 & -14.963 & 206.077 \\
			& $\hat{\mathbf{p}}^{RMM}$           & 3.914 & 3.914 & 14.062 & -4.623 & -4.623 & 108.752 & 9.205 & -15.437 & 206.077 \\
			& $\hat{\mathbf{p}}^{B}$             & -0.800 & -0.800 & 8.924 & -12.035 & -12.035 & 109.953 & -10.672 & -23.050 & 209.083 \\
			$n = 25$ & $\hat{\mathbf{p}}^{MLE}$ & 0.993 & 0.993 & 10.694 & -2.916 & -2.916 & 51.631 & -1.630 & -8.554 & 102.982 \\
			& $\hat{\mathbf{p}}^{RMM}$          & 0.973 & 0.973 & 10.694 & -3.065 & -3.065 & 51.631 & -1.904 & -8.770 & 102.982 \\
			& $\hat{\mathbf{p}}^{B}$            & -0.856 & -0.856 & 8.705 & -6.106 & -6.106 & 55.188 & -5.949 & -12.646 & 114.048 \\
			$n = 50$ & $\hat{\mathbf{p}}^{MLE}$ & 0.088 & 0.088 & 9.442 & -1.646 & -1.646 & 27.247 & -1.402 & -5.377 & 61.226 \\
			& $\hat{\mathbf{p}}^{RMM}$          & 0.064 & 0.064 & 9.442 & -1.724 & -1.724 & 27.247 & -1.559 & -5.500 & 61.226 \\
			& $\hat{\mathbf{p}}^{B}$            & -0.842 & -0.842 & 8.461 & -3.301 & -3.301 & 29.770 & -3.601 & -7.621 & 68.648 \\
			$n = 100$ & $\hat{\mathbf{p}}^{MLE}$ & -0.330 & -0.330 & 8.494 & -0.685 & -0.685 & 12.290 & -0.887 & -3.062 & 33.958 \\
			& $\hat{\mathbf{p}}^{RMM}$          & -0.349 & -0.349 & 8.494 & -0.722 & -0.722 & 12.290 & -0.974 & -3.129 & 33.958 \\
			& $\hat{\mathbf{p}}^{B}$            & -0.800 & -0.800 & 8.010 & -1.549 & -1.549 & 13.957 & -2.030 & -4.290 & 38.628 \\
			\cmidrule(lr){3-11}
			\multicolumn{2}{r}{$(p_{10}\ p_{01}\ p_{11}) = $} & (0.1 & 0.1 & 0.1) & (0.15 & 0.1 & 0.2) & (0.25 & 0.05 & 0.15) \\ 
			\cmidrule(lr){3-5} \cmidrule(lr){6-8} \cmidrule(lr){9-11}
			$n = 10$ & $\hat{\mathbf{p}}^{MLE}$  & 111.716 & 111.716 & 129.351 & 11.390 & -31.535 & 265.909 & 110.522 & -72.403 & 184.599 \\
			& $\hat{\mathbf{p}}^{RMM}$           & 111.318 & 111.318 & 129.351 & 11.368 & -31.566 & 265.909 & 110.506 & -72.460 & 184.599 \\
			& $\hat{\mathbf{p}}^{B}$             & -33.310 & -33.310 & 25.769 & -84.340 & -88.696 & 17.266 & -69.376 & -90.862 & 25.260 \\
			$n = 25$ & $\hat{\mathbf{p}}^{MLE}$ & 59.336 & 59.336 & -7.420 & 90.400 & 16.338 & 151.864 & 166.329 & -60.572 & 54.524 \\
			& $\hat{\mathbf{p}}^{RMM}$          & 58.873 & 58.873 & -7.420 & 90.335 & 16.252 & 151.864 & 166.284 & -60.668 & 54.524 \\
			& $\hat{\mathbf{p}}^{B}$            & 12.736 & 12.736 & -12.717 & -59.829 & -63.740 & 23.014 & -45.998 & -69.863 & 23.300 \\
			$n = 50$ & $\hat{\mathbf{p}}^{MLE}$ & 24.030 & 24.030 & -14.632 & 135.538 & 52.818 & 52.866 & 157.344 & -37.670 & 18.687 \\
			& $\hat{\mathbf{p}}^{RMM}$          & 23.786 & 23.786 & -14.632 & 135.549 & 52.844 & 52.866 & 157.268 & -37.754 & 18.687 \\
			& $\hat{\mathbf{p}}^{B}$            & 10.603 & 10.603 & -10.444 & -19.383 & -14.790 & 5.401 & -26.463 & -40.935 & 13.750 \\
			$n = 100$ & $\hat{\mathbf{p}}^{MLE}$ & 10.090 & 10.090 & -6.788 & 131.223 & 82.807 & -19.231 & 121.669 & 1.298 & 1.744 \\
			& $\hat{\mathbf{p}}^{RMM}$          & 10.033 & 10.033 & -6.788 & 131.371 & 82.942 & -19.231 & 121.667 & 1.259 & 1.744 \\
			& $\hat{\mathbf{p}}^{B}$            & 1.437 & 1.437 & -1.403 & 32.649 & 53.351 & -26.689 & -8.370 & -1.585 & 0.551 \\
			\bottomrule
			
	\end{tabular}}
\end{table}

\begin{table}[H]
	\caption{$1000\ \times$ mean squared error (MSE) for $k = 2$}
	\label{tab:mse1}
	\centering
	\resizebox{\columnwidth}{!}{%
		\begin{tabular}{llccccccccc}
			\toprule
			\multicolumn{2}{r}{$(p_{10}\ p_{01}\ p_{11}) = $} & (0.001 & 0.001 & 0.0001) & (0.045 & 0.045 & 0.005) & (0.095 & 0.045 & 0.005) \\ 
			\cmidrule(lr){3-5} \cmidrule(lr){6-8} \cmidrule(lr){9-11}
			$n = 10$ & $\hat{\mathbf{p}}^{MLE}$  & 0.053 & 0.053 & 0.005 & 2.286 & 2.286 & 0.351 & 4.779 & 2.248 & 0.446 \\
			& $\hat{\mathbf{p}}^{RMM}$           & 0.053 & 0.053 & 0.005 & 2.272 & 2.272 & 0.351 & 4.734 & 2.222 & 0.446 \\
			& $\hat{\mathbf{p}}^{B}$             & 0.050 & 0.050 & 0.005 & 2.154 & 2.154 & 0.333 & 4.476 & 2.109 & 0.425 \\
			$n = 25$ & $\hat{\mathbf{p}}^{MLE}$ & 0.020 & 0.020 & 0.002 & 0.895 & 0.895 & 0.127 & 1.867 & 0.903 & 0.156 \\
			& $\hat{\mathbf{p}}^{RMM}$          & 0.020 & 0.020 & 0.002 & 0.892 & 0.892 & 0.127 & 1.856 & 0.898 & 0.156 \\
			& $\hat{\mathbf{p}}^{B}$            & 0.020 & 0.020 & 0.002 & 0.874 & 0.874 & 0.125 & 1.818 & 0.880 & 0.154 \\
			$n = 50$ & $\hat{\mathbf{p}}^{MLE}$ & 0.010 & 0.010 & 0.001 & 0.451 & 0.451 & 0.060 & 0.938 & 0.465 & 0.074 \\
			& $\hat{\mathbf{p}}^{RMM}$          & 0.010 & 0.010 & 0.001 & 0.450 & 0.450 & 0.060 & 0.935 & 0.464 & 0.074 \\
			& $\hat{\mathbf{p}}^{B}$            & 0.010 & 0.010 & 0.001 & 0.445 & 0.445 & 0.059 & 0.926 & 0.459 & 0.073 \\
			$n = 100$ & $\hat{\mathbf{p}}^{MLE}$ & 0.005 & 0.005 & 0.001 & 0.229 & 0.229 & 0.030 & 0.471 & 0.237 & 0.038 \\
			& $\hat{\mathbf{p}}^{RMM}$          & 0.005 & 0.005 & 0.001 & 0.229 & 0.229 & 0.030 & 0.470 & 0.236 & 0.038 \\
			& $\hat{\mathbf{p}}^{B}$            & 0.005 & 0.005 & 0.001 & 0.227 & 0.227 & 0.030 & 0.467 & 0.235 & 0.038 \\
			\cmidrule(lr){3-11}
			\multicolumn{2}{r}{$(p_{10}\ p_{01}\ p_{11}) = $} & (0.1 & 0.1 & 0.1) & (0.15 & 0.1 & 0.2) & (0.25 & 0.05 & 0.15) \\ 
			\cmidrule(lr){3-5} \cmidrule(lr){6-8} \cmidrule(lr){9-11}
			$n = 10$ & $\hat{\mathbf{p}}^{MLE}$  & 6.581 & 6.581 & 5.672 & 13.421 & 9.927 & 11.993 & 18.174 & 5.241 & 8.946 \\
			& $\hat{\mathbf{p}}^{RMM}$           & 6.546 & 6.546 & 5.672 & 13.331 & 9.860 & 11.993 & 17.994 & 5.188 & 8.946 \\
			& $\hat{\mathbf{p}}^{B}$             & 6.001 & 6.001 & 5.345 & 10.794 & 7.940 & 11.017 & 14.779 & 4.368 & 8.338 \\
			$n = 25$ & $\hat{\mathbf{p}}^{MLE}$ & 2.494 & 2.494 & 2.268 & 4.349 & 3.196 & 4.444 & 6.008 & 1.766 & 3.426 \\
			& $\hat{\mathbf{p}}^{RMM}$          & 2.494 & 2.494 & 2.268 & 4.349 & 3.196 & 4.444 & 6.006 & 1.766 & 3.426 \\
			& $\hat{\mathbf{p}}^{B}$            & 2.416 & 2.416 & 2.215 & 4.145 & 3.045 & 4.316 & 5.722 & 1.682 & 3.332 \\
			$n = 50$ & $\hat{\mathbf{p}}^{MLE}$ & 1.224 & 1.224 & 1.123 & 2.094 & 1.537 & 2.173 & 2.898 & 0.849 & 1.680 \\
			& $\hat{\mathbf{p}}^{RMM}$          & 1.224 & 1.224 & 1.123 & 2.094 & 1.537 & 2.173 & 2.898 & 0.849 & 1.680 \\
			& $\hat{\mathbf{p}}^{B}$            & 1.205 & 1.205 & 1.110 & 2.049 & 1.503 & 2.143 & 2.834 & 0.830 & 1.657 \\
			$n = 100$ & $\hat{\mathbf{p}}^{MLE}$ & 0.606 & 0.606 & 0.557 & 1.030 & 0.756 & 1.075 & 1.427 & 0.417 & 0.832 \\
			& $\hat{\mathbf{p}}^{RMM}$          & 0.606 & 0.606 & 0.557 & 1.030 & 0.756 & 1.075 & 1.427 & 0.417 & 0.832 \\
			& $\hat{\mathbf{p}}^{B}$            & 0.601 & 0.601 & 0.554 & 1.019 & 0.748 & 1.068 & 1.411 & 0.413 & 0.826 \\
			\bottomrule
			
	\end{tabular}}
\end{table}

\begin{table}[H]
	\caption{$1000 \ \times$ mean squared error (MSE) for $k = 10$}
	\label{tab:mse2}
	\centering
	\resizebox{\columnwidth}{!}{%
		\begin{tabular}{llccccccccc}
			\toprule
			\multicolumn{2}{r}{$(p_{10}\ p_{01}\ p_{11}) = $} & (0.001 & 0.001 & 0.0001) & (0.045 & 0.045 & 0.005) & (0.095 & 0.045 & 0.005) \\ 
			\cmidrule(lr){3-5} \cmidrule(lr){6-8} \cmidrule(lr){9-11}
			$n = 10$ & $\hat{\mathbf{p}}^{MLE}$  & 0.011 & 0.011 & 0.001 & 0.848 & 0.848 & 0.225 & 11.757 & 1.044 & 0.526 \\
			& $\hat{\mathbf{p}}^{RMM}$           & 0.011 & 0.011 & 0.001 & 0.838 & 0.838 & 0.225 & 11.710 & 1.030 & 0.526 \\
			& $\hat{\mathbf{p}}^{B}$             & 0.010 & 0.010 & 0.001 & 0.654 & 0.654 & 0.210 & 1.679 & 0.859 & 0.492 \\
			$n = 25$ & $\hat{\mathbf{p}}^{MLE}$ & 0.004 & 0.004 & 0.000 & 0.288 & 0.288 & 0.085 & 0.782 & 0.368 & 0.185 \\
			& $\hat{\mathbf{p}}^{RMM}$          & 0.004 & 0.004 & 0.000 & 0.286 & 0.286 & 0.085 & 0.772 & 0.365 & 0.185 \\
			& $\hat{\mathbf{p}}^{B}$            & 0.004 & 0.004 & 0.000 & 0.272 & 0.272 & 0.084 & 0.703 & 0.359 & 0.190 \\
			$n = 50$ & $\hat{\mathbf{p}}^{MLE}$ & 0.002 & 0.002 & 0.000 & 0.146 & 0.146 & 0.045 & 0.374 & 0.189 & 0.095 \\
			& $\hat{\mathbf{p}}^{RMM}$          & 0.002 & 0.002 & 0.000 & 0.145 & 0.145 & 0.045 & 0.370 & 0.188 & 0.095 \\
			& $\hat{\mathbf{p}}^{B}$            & 0.002 & 0.002 & 0.000 & 0.142 & 0.142 & 0.045 & 0.361 & 0.189 & 0.099 \\
			$n = 100$ & $\hat{\mathbf{p}}^{MLE}$ & 0.001 & 0.001 & 0.000 & 0.076 & 0.076 & 0.025 & 0.190 & 0.100 & 0.052 \\
			& $\hat{\mathbf{p}}^{RMM}$          & 0.001 & 0.001 & 0.000 & 0.075 & 0.075 & 0.025 & 0.188 & 0.100 & 0.052 \\
			& $\hat{\mathbf{p}}^{B}$            & 0.001 & 0.001 & 0.000 & 0.075 & 0.075 & 0.025 & 0.187 & 0.100 & 0.053 \\
			\cmidrule(lr){3-11}
			\multicolumn{2}{r}{$(p_{10}\ p_{01}\ p_{11}) = $} & (0.1 & 0.1 & 0.1) & (0.15 & 0.1 & 0.2) & (0.25 & 0.05 & 0.15) \\ 
			\cmidrule(lr){3-5} \cmidrule(lr){6-8} \cmidrule(lr){9-11}
			$n = 10$ & $\hat{\mathbf{p}}^{MLE}$  & 107.093 & 107.093 & 108.999 & 103.465 & 48.242 & 430.127 & 229.936 & 9.708 & 224.885 \\
			& $\hat{\mathbf{p}}^{RMM}$           & 107.021 & 107.021 & 108.999 & 103.462 & 48.235 & 430.127 & 229.939 & 9.700 & 224.885 \\
			& $\hat{\mathbf{p}}^{B}$             & 5.963 & 5.963 & 8.182 & 18.051 & 9.102 & 4.409 & 33.645 & 2.705 & 5.507 \\
			$n = 25$ & $\hat{\mathbf{p}}^{MLE}$ & 33.024 & 33.024 & 11.586 & 134.286 & 58.504 & 242.080 & 245.355 & 6.821 & 44.190 \\
			& $\hat{\mathbf{p}}^{RMM}$          & 32.906 & 32.906 & 11.586 & 134.262 & 58.476 & 242.080 & 245.350 & 6.805 & 44.190 \\
			& $\hat{\mathbf{p}}^{B}$            & 6.993 & 6.993 & 6.531 & 14.150 & 9.705 & 11.590 & 16.595 & 3.947 & 6.101 \\
			$n = 50$ & $\hat{\mathbf{p}}^{MLE}$ & 6.998 & 6.998 & 3.791 & 130.373 & 52.119 & 102.981 & 219.979 & 6.169 & 8.984 \\
			& $\hat{\mathbf{p}}^{RMM}$          & 6.934 & 6.934 & 3.791 & 130.382 & 52.136 & 102.981 & 219.952 & 6.156 & 8.984 \\
			& $\hat{\mathbf{p}}^{B}$            & 4.800 & 4.800 & 3.613 & 13.320 & 14.354 & 18.479 & 7.891 & 5.559 & 6.283 \\
			$n = 100$ & $\hat{\mathbf{p}}^{MLE}$ & 2.134 & 2.134 & 1.630 & 91.463 & 37.516 & 36.011 & 167.099 & 7.515 & 7.635 \\
			& $\hat{\mathbf{p}}^{RMM}$          & 2.120 & 2.120 & 1.630 & 91.560 & 37.585 & 36.011 & 167.102 & 7.510 & 7.635 \\
			& $\hat{\mathbf{p}}^{B}$            & 1.782 & 1.782 & 1.444 & 18.683 & 23.035 & 24.012 & 5.322 & 7.265 & 7.360 \\
			\bottomrule
			
	\end{tabular}}
\end{table}


\begin{thebibliography}{}
	
			\bibitem[Avrahami-Moyal et al.(2017)]{am2017}
                          Avrahami-Moyal, L., Tam, Y., Brumin, M., Prakash, S., Leibman, D., Pearlsman, M.,
                          Bornstein, M., Sela, N., Zeidan, M., Dar, Z., Zig, U., Gal-On, A., and Gaba, V. (2017).
                          Detection of Potato virus Y in industrial quantities of seed potatoes by TaqMan Real Time PCR.
			\textit{Phytoparasitica} \textbf{45} 591--598.

			\bibitem[\protect\citeauthoryear{Burrows}{1987}]{burrows1987}
			Burrows, P. M. (1987).
			Improved Estimation of Pathogen Transmission Rates by Group Testing.
			{\em Phytopathology} \textbf{77} 363--365.

			\bibitem[Ding and Xiong(2015)]{ding2015}
			Ding, J. and Xiong, W. (2015).
			Robust group testing for multiple traits with misclassification.
			\textit{Journal of Applied Statistics} \textbf{42} 2115--2125.
			
			\bibitem[Ding and Xiong(2016)]{ding2016}
			Ding, J. and Xiong, W. (2016).
			A new estimator for a population proportion using group testing.
			\textit{Communications in Statistics--Simulation and Computation} \textbf{45} 101--114.
	
			\bibitem[Fletcher(2012)]{fletcher2012}
Fletcher, J. D. (2012).
A virus survey of New Zealand fresh, process and seed potato crops during 2010-11.
\textit{New Zealand Plant Protection} \textbf{65} 197--203.

                        \bibitem[Gray et al.(2010)]{gray2010}
                        Gray, S., De Boer, S., Lorenzen, J., Karazev, A., Whitworth, J., Nolte, P., Singh, R.,
                        Boucher, A., and Xu, H. (2010).
                        Potato virus Y: an evolving concern for potato crops in the United States and Canada.
			{\em Plant Disease} \textbf{94} 1384--1397.


			\bibitem[Grend\'ar and \u Spitalsk\'y(2017)]{grend2017}
			Grend\'ar, M. and \u Spitalsk\'y, V. (2017).
			Multinomial and empirical likelihood under convex constraints: Directions of recession, Fenchel duality, the PP algorithm.
			\textit{Electronic Journal of Statistics} \textbf{11} 2547--2612.
			
			\bibitem[Haber and Malinovsky(2017)]{haber2017}
			Haber, G. and Malinovsky, Y. (2017).
			Random walk designs for selecting pool sizes in group testing estimation with small samples.
			\textit{Biometrical Journal} \textbf{59} 1382--1398.
			
			\bibitem[Haber and Malinovsky(2018)]{haberM2018}
			Haber, G. and Malinovsky, Y. (2018).
			On the construction of unbiased estimators for the group testing problem.
			\textit{Sankhya A}. https://doi.org/10.1007/s13171-018-0156-4.
			
			\bibitem[Haber et al.(2018)]{haber2018}
			Haber, G., Malinovsky, Y., and Albert, P. S. (2018).
			Sequential estimation in the group testing problem.
			\textit{Sequential Analysis} \textbf{37} 1--17.

			\bibitem[\protect\citeauthoryear{Hepworth and Watson}{2009}]{hepworth2009}
			Hepworth, G. and Watson, R. (2009).
			Debiased estimation of proportions in group testing.
			{\em Journal of Royal Statistical Society, Series C} \textbf{58} 105--121.

			\bibitem[Hughes-Oliver and Rosenberger(2000)]{ho2000}
			Hughes-Oliver, J. M. and Rosenberger, W. (2000).
			Efficient estimation of the prevalence of multiple rare traits.
			{\em Biometrika} \textbf{87} 315--327.
			
			\bibitem[Hughes-Oliver and Swallow(1994)]{hos1994}
			Hughes-Oliver, J. M. and Swallow, W. H. (1994).
			A two-stage adaptive group testing procedure for estimating small proportions.
			\textit{Journal of the American Statistical Association} \textbf{89} 982--993.
			
			\bibitem[Hyun et al.(2018)]{hyun2018}
			Hyun, N., Gastwirth, J. L., Graubard, B. I. (2018).
			Grouping methods for estimating prevalences of rare traits for complex survey data that preserve confidentiality of respondents.
			\textit{Statistics in Medicine} \textbf{37} 2174--2186.
			
			\bibitem[Jamshidian(2004)]{jam2004}
			Jamshidian, M. (2004).
			On algorithms for restricted maximum likelihood estimation.
			{\em Computational Statistics and Data Analysis} \textbf{45} 137--157.

			\bibitem[Li et al.(2017)]{li2017}
			Li, Q., Liu, A., and Xiong, W. (2017).
			D-Optimality of group testing for joint estimation of correlated rare diseases with misclassification.
			\textit{Statistica Sinica} \textbf{27} 823--838.
                        
                        \bibitem[Liu et al.(2011)]{liu2011}
			Liu, S. C., Chiang, K. S., Lin, C. H., Chung, W. C., Lin, S. H., and Yang, T. C. (2011).
                        Cost analysis in choosing group size when group testing for Potato virus Y in the presence
                        of classification errors.
			\textit{Annals of Applied Biology} \textbf{159} 491--502.
			    
			\bibitem[Liu et al.(2012)]{liu2012}
			Liu, A., Liu, C., Zhang, Z., and Albert, P. S. (2012).
			Optimality of group testing in the presence of misclassification.
			\textit{Biometrika} \textbf{99} 245--251.

                        \bibitem[Lorenzen(2006)]{lorenzen2006}
                          Lorenzen, J. H., Piche, L. M., Gudmestad, N. C., Meacham, T., and Shiel, P. (2006).
                          A multiplex PCR assay to characterize potato virus Y isolates and identify strain
                          mixtures. 
			\textit{Plant Disease} \textbf{90} 935--940.

                        \bibitem[Mallik et al.(2012)]{mallik2012}
			Mallik, I., Anderson, N. R., and Gudmestad, N. C. (2012).
                        Detection and differentiation of Potato Virus Y strains from potato using immunocapture
                        multiples RT-PCR.
			{\em American Journal of Potato Research} \textbf{89} 184--191.

                        \bibitem[Mello et al.(2011)]{mello2011}
                          Mello, A. F. S., Olarte, R. A., Gray, S. M., and Perry, K. L. (2011).
                          Transmission efficiency of Potato virus Y strains PVY\textsuperscript{O} and
                          PVY\textsuperscript{N-Wi} by five aphid species.
			\textit{Plant Disease} \textbf{95} 1279--1283.

                        \bibitem[Mondal et al.(2017)]{mondal2017}
                          Mondal, S., Lin, Y., Carroll, J. E., Wenninger, E. J., Bosque-Perez, N. A., Whitworth, J.
                          L., Hutchinson, P., Eigenbrode, S., and Gray, S. M. (2017).
                          Potato virus Y transmission efficiency from potato infected with single or multiple virus strains.
			{\em Phytopathology} \textbf{107} 491--498.
			
			\bibitem[Nelder and Mead(1965)]{nelder1965}
			Nelder, J. A. and Mead, R. (1965).
			A simplex method for function minimization.
			\textit{The Computer Journal} \textbf{7} 308--313.

			\bibitem[Nettleton(1999)]{net1999}
			Nettleton, D. (1999).
			Convergence properties of the EM Algorithm in constrained parameter spaces.
			{\em Canadian Journal of Statistics} \textbf{27} 639--648.

			\bibitem[Pfeiffer et al.(2002)]{pfeiffer2002}
			Pfeiffer, R. M., Rutter, J. L., Gail, M. H., Struewing, J., and Gastwirth, J. L. (2002).
			Efficiency of DNA pooling to estimate joint allele frequencies and measure linkage disequilibrium.
			\textit{Genetic Epidemiology} \textbf{22} 94--102.

			\bibitem[Santos and Dorgman(2016)]{santos2016}
			Santos, J. D. and Dorgman, D. (2016).
			An approximate likelihood estimator for the prevalence of infections in vectors using pools of varying sizes.
			\textit{Biometrical Journal} \textbf{58} 1248--1256.

			\bibitem[\protect\citeauthoryear{Swallow}{1985}]{swallow1985}
			Swallow, W. H. (1985).
			Group Testing for Estimating Infection Rates and Probabilities of Disease Transmission.
			{\em Phytopathology} \textbf{75} 882--889.

			\bibitem[Tebbs et al.(2003)]{tebbs2003}
			Tebbs, J. M., Bilder, C. R., and Koser, B. K. (2003).
			An empirical Bayes group-testing approach to estimating small proportions.
			\textit{Communications in Statistics -- Theory and Methods} \textbf{32} 983--995.

			\bibitem[Tebbs et al.(2013)]{tebbs2013}
			Tebbs, J. M., McMahan, C. S., and Bilder, C. R. (2013).
			Two-stage hierarchical group testing for multiple infections with application to the infertility prevention project.
			{\em Biometrics} \textbf{69} 1064--1073.

			\bibitem[\protect\citeauthoryear{Thompson}{1962}]{thompson1962}
			Thompson, K. H. (1962).
			Estimation of the proportion of vectors in a natural population of insects,
			{\em Biometrics} \textbf{18} 568--578.
			
			    \bibitem[Tu et al.(1995)]{tu1995}
			Tu, X. M., Litvak, E., and Pagano, M. (1995).
			On the informativeness and accuracy of pooled testing in estimating prevalence of a rare disease: application to HIV screening.
			\textit{Biometrika} \textbf{82} 287--297.

			\bibitem[Warasi et al.(2016)]{warasi2016}
			Warasi, M. S., Tebbs, J. M., McMahan, C. S., and Bilder, C. R. (2016).
			Estimating the prevalence of multiple diseases from two-stage hierarchical pooling.
			{\em Statistics In Medicine} \textbf{35} 3851--3864.

			\bibitem[Wu(1983)]{wu1983}
                        Wu, C. F. (1983).
                        On the convergence properties of the EM algorithm.
			{\em The Annals of Statistics} \textbf{11} 95--103.
			
			    \bibitem[Zhang et al.(2014)]{zhang2014}
			Zhang, Z., Liu, C., Kim, S., and Liu, A. (2014).
			Prevalence estimation subject to misclassification: the mis-substitution bias and some remedies.
			\textit{Statistics in Medicine} \textbf{33} 4482--4500.

			\end{thebibliography}
\end{document}